\pgfplotsset{width=9cm,compat=1.5.1}
\newtheorem{definition}{Definition}
\newtheorem{lemma}{Lemma}
\newtheorem{theorem}{Theorem}
\newtheorem{corollary}{Corollary}
\newtheorem{example}{Example}
\newcommand{\tr}{\operatorname{Tr}}
\newcommand{\rank}{\operatorname{rank}}
\newcommand{\bra}[1]{\langle #1 |}
\newcommand{\ket}[1]{| #1 \rangle}
\newcommand{\ketbra}[2]{| #1 \rangle\langle #2 |}
\newcommand{\defeq}{\stackrel{\smash{\textnormal{\tiny def}}}{=}}
\begin{document}
\title{Birkhoff--James Orthogonality in the Trace Norm, with Applications to Quantum Resource Theories}

\author{
	Nathaniel Johnston,\textsuperscript{\!\!1,2} \ Shirin Moein,\textsuperscript{\!\!1,2,3} \ Rajesh Pereira,\textsuperscript{\!\!2} \ and Sarah Plosker\textsuperscript{2,3}
}

\maketitle

\begin{abstract}
	We develop numerous results that characterize when a complex Hermitian matrix is Birkhoff--James orthogonal, in the trace norm, to a (Hermitian) positive semidefinite matrix or set of positive semidefinite matrices. For example, we develop a simple-to-test criterion that determines which Hermitian matrices are Birkhoff--James orthogonal, in the trace norm, to the set of all positive semidefinite diagonal matrices. We then explore applications of our work in the theory of quantum resources. For example, we characterize exactly which quantum states have modified trace distance of coherence equal to $1$ (the maximal possible value), and we establish a connection between the modified trace distance of $2$-entanglement and the NPPT bound entanglement problem.\\
	
	\noindent \textbf{Keywords:} Trace norm, Birkhoff--James orthogonality, quantum coherence, quantum entanglement, quantum resource theory\\
	
	\noindent \textbf{MSC2010 Classification:} 
	15A03, 
15A60, 
15B57, 
81P40
\end{abstract}

\addtocounter{footnote}{1}
\footnotetext{Department of Mathematics \& Computer Science, Mount Allison University, Sackville, NB, Canada E4L 1E4}\addtocounter{footnote}{1}
\footnotetext{Department of Mathematics \& Statistics, University of Guelph, Guelph, ON, Canada N1G 2W1} \addtocounter{footnote}{1}
\footnotetext{Department of Mathematics \& Computer Science, Brandon University, Brandon,
    MB, Canada R7A 6A9}

\section{Introduction}\label{sec:intro}
Birkhoff--James orthogonality was introduced in \cite{james1947orthogonality} to provide a definition of orthogonality in normed vector spaces that extends the usual one from inner product spaces. Necessary and sufficient conditions for Birkhoff--James orthogonality of matrices in the operator norm were given in \cite{bhatia1999orthogonality}, and a study of  Birkhoff--James orthogonality of matrices in the Schatten $p$-norms as well as operator norm was performed in \cite{li2002orthogonality}. Birkhoff--James orthogonality of a given Hermitian matrix to every member of the subspace of real diagonal matrices, under the operator norm, was studied in \cite{ALRV12}.

In this work, we characterize when a given complex Hermitian matrix is Birkhoff--James orthogonal in the trace norm to a given (Hermitian) positive semidefinite matrix. We then explore numerous consequences of this main result, such as a simple-to-check characterization of when a given Hermitian matrix is Birkhoff--James orthogonal in the trace norm to every positive semidefinite diagonal matrix.

Our result has several natural applications in quantum information theory, where (mixed) quantum states, also known as density matrices, are positive semidefinite trace-one matrices. When working with a quantum resource theory, it is natural to ask how far a given quantum state is from a particular convex set of states of interest \cite{Reg18}, and the natural norm to use to measure distance is the trace norm \cite[Chapter 3]{Wat18}. For example, in the resource theory of quantum coherence \cite{BCP14}, which is of particular interest in quantum optics, quantum biology, and quantum thermodynamics \cite{Glau63,Su63}, the aforementioned convex set of states is exactly those that are diagonal. In this case, our results concerning Birkhoff--James orthogonality to all positive semidefinite diagonal matrices provide a characterization of states that are a trace distance of $1$ (the largest such distance possible) from that set, thus extending a result from \cite{johnston2018modified} from pure states to mixed states.

We also extend our results to several other quantum resource theories, including those of entanglement and $k$-coherence. In particular, when applying our result to the resource theory of $2$-entanglement (i.e., Schmidt number $2$ \cite{TH00}), we show that the long-standing NPPT bound entanglement problem \cite{HRZ20} has a natural rephrasing in terms of Birkhoff--James orthogonality.

\subsection{Notation and Terminology}\label{sec:notation}

We now introduce the mathematical preliminaries that we need to discuss and prove our results concerning Birkhoff--James orthogonality. We defer a brief introduction to quantum information theory to Section~\ref{sec:quantum_resource_theories}, when we need it.

Let $M_n$ be the set of all $n\times n$ matrices with complex entries and let $M_n^+$ be the subset of them that are (Hermitian) positive semidefinite. We use bold lower case letters such as $\mathbf{v}$ and $\mathbf{w}$ to denote vectors in $\mathbb{C}^n$, with the entries of $\mathbf{v}\in \mathbb C^n$ denoted by $v_1$, $v_2$, $\ldots$, $v_n$. We denote the standard basis vectors by $\{\mathbf{e_j}\}_{j=1}^n$, and we use $A_{i,j}$ to denote the $(i,j)$-entry of a matrix $A\in M_n$. The eigenvalues of $A$ are denoted by $\lambda_j$ or by $\mu_j$, while the singular values of $A$---the non-negative square roots of the eigenvalues of $AA^*$---are denoted by $\sigma_1 \geq \sigma_2 \geq \cdots \geq \sigma_n \geq 0$. If $A$ and $B$ are Hermitian matrices, we use the notation $A \succeq B$ (or $B \preceq A$) to denote $A-B$ being positive semidefinite).

The trace norm of $A$ is $\|A\|_{\textup{tr}} \defeq \sum_{j=1}^n \sigma_j$ and the operator norm of $A$ is $\|A\| \defeq \sigma_1$. If $A$ is Hermitian then its singular values are the absolute values of its eigenvalues, so $\|A\|_{\textup{tr}} = \sum_{j=1}^n |\lambda_j|$. When discussing norms in general, we use the notation $\vvvert\cdot\vvvert$, to avoid confusion with the operator norm. Given a Hermitian matrix $A$, we denote the  eigenprojection matrices, projecting onto the direct sum of the eigenspaces corresponding to the positive, zero, and negative eigenvalues, respectively, by $P^+$, $P_0$, and $P^-$, and we note that $P^++P_0+P^-=I$. For simplicity of terminology, we refer to these as the orthogonal projections onto the positive, zero, and negative eigenspaces of $A$ from now on.

As a generalization of orthogonality of vectors in a Hilbert space, we have the following notion of orthogonality in Banach spaces \cite{james1947orthogonality}:

\begin{definition}\label{defn:bj_orth}
    Suppose $(\mathcal{X}, \vvvert\cdot\vvvert)$ is a Banach space over $\mathbb{R}$, and $A, B \in \mathcal{X}$. We say that $A$ is \emph{Birkhoff--James orthogonal} to $B$ if
    \[
        \vvvert A\vvvert \leq \vvvert A + \lambda B\vvvert \quad \text{for all} \quad \lambda \in \mathbb{R}.
    \]
\end{definition}

We note that this notion of orthogonality is homogeneous and additive, but not symmetric: if $A$ is Birkhoff--James orthogonal to $B$, then it is not necessarily the case that $B$ is Birkhoff--James orthogonal to $A$.

Here, we are interested in the case where a Hermitian matrix $A$ is Birkhoff--James orthogonal under the trace norm $\|\cdot\|_{\textup{tr}}$ to a positive semidefinite matrix $B$. For this reason, from now on we simply use the terminology \emph{Birkhoff--James orthogonal} to signify Birkhoff--James orthogonal \emph{with respect to the trace norm}, and we note that $A$ being Birkhoff--James orthogonal to $B$ means exactly that
\[
    \|A\|_{\textup{tr}} \leq \|A+\lambda B\|_{\textup{tr}} \quad \text{for all} \quad \lambda \in \mathbb{R}.
\]

Definition~\ref{defn:bj_orth} of Birkhoff--James orthogonality works just fine if the field $\mathbb{R}$ is replaced by $\mathbb{C}$. For now (and only now), we briefly consider Birkhoff--James orthogonality in this sense, and recall the following result:

\begin{theorem}\cite[Theorem 3.3]{li2002orthogonality}\label{LS1} Let $A,B\in M_n$. Then the following are equivalent.
\begin{itemize}
  \item[(a)]\label{a:Birkhoff-James orthogonal}
 $A$ is Birkhoff--James orthogonal to $B$ (in the trace norm).
    \item[(b)]\label{b:Birkhoff-James orthogonal}
There exists a  matrix $M\in M_n$ with   $\|M\|\leq 1$ 
such that $\tr(AM^*)=\Vert A\Vert_{\textup{tr}}$  and $\tr(BM^*)=0$.
\end{itemize}
\end{theorem}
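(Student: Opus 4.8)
The two implications have rather different character, so I would treat them one at a time. \emph{For (b)$\Rightarrow$(a)} one only needs the elementary duality inequality $|\tr(XM^*)| \leq \|X\|_{\textup{tr}}\,\|M\|$, valid for all $X, M \in M_n$ (von Neumann's trace inequality, equivalently H\"older for Schatten norms). If $M$ is as in (b), then $\|M\|\leq 1$ and $\tr(BM^*)=0$, so for every $\lambda \in \mathbb{C}$,
\[
    \|A + \lambda B\|_{\textup{tr}} \geq |\tr((A+\lambda B)M^*)| = |\tr(AM^*)| = \|A\|_{\textup{tr}},
\]
which is exactly the assertion that $A$ is Birkhoff--James orthogonal to $B$.

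\emph{For (a)$\Rightarrow$(b)}, which is the substantive direction, I would argue by Hahn--Banach separation in $M_n$ regarded as a real vector space of dimension $2n^2$ with the real inner product $\langle X, Y\rangle \defeq \Re{\tr(XY^*)}$. If $A = 0$ the matrix $M = 0$ works, so assume $A \neq 0$ and put $r \defeq \|A\|_{\textup{tr}} > 0$. Consider the closed trace-norm ball $C \defeq \{X \in M_n : \|X\|_{\textup{tr}} \leq r\}$, which is a compact convex body with $A$ on its boundary, and the real affine flat $L \defeq \{A + \lambda B : \lambda \in \mathbb{C}\} = A + \operatorname{span}_{\mathbb{R}}\{B, iB\}$. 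Hypothesis (a) says exactly that $L$ is disjoint from the interior of $C$, since any point $A + \lambda_0 B$ lying in $\operatorname{int}(C)$ would have $\|A+\lambda_0 B\|_{\textup{tr}} < r = \|A\|_{\textup{tr}}$. Separating the open convex set $\operatorname{int}(C)$ from the convex set $L$ yields a nonzero real-linear functional $\phi$ on $M_n$ and a real number $c$ with $\phi \leq c$ on $C$ and $\phi \geq c$ on $L$. Since $\phi$ is bounded below on the affine flat $L$, it must be constant on $L$, so $\phi(B) = \phi(iB) = 0$; and as $A \in L \cap C$ we get $\phi(A) = c = \sup_{X\in C}\phi(X)$.

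To finish I would translate back to matrices. Write $\phi(X) = \Re{\tr(XM^*)}$ for the (nonzero) representing matrix $M$, and normalize the functional so that $\|M\| = 1$. The key input is again norm duality, now in the sharp form $\sup_{X\in C}\Re{\tr(XM^*)} = r\|M\| = r$, which follows from the singular value decomposition $M = U\Sigma V^*$ by taking $X = rUV^*$. Thus $\Re{\tr(AM^*)} = \phi(A) = c = r = \|A\|_{\textup{tr}}$; combined with $|\tr(AM^*)| \leq \|A\|_{\textup{tr}}\|M\| = \|A\|_{\textup{tr}}$ this forces the imaginary part to vanish, so $\tr(AM^*) = \|A\|_{\textup{tr}}$. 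Likewise $\Re{\tr(BM^*)} = \phi(B) = 0$ and $-\Im{\tr(BM^*)} = \Re{\tr(iBM^*)} = \phi(iB) = 0$, so $\tr(BM^*) = 0$, and $M$ is as required in (b).

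The step I expect to be the main obstacle is this separation-plus-duality package: one has to be sure the separating functional is genuinely nonzero --- which is why it is convenient to separate the \emph{open} set $\operatorname{int}(C)$ from $L$ rather than to look directly for a supporting hyperplane of $C$ at $A$ that happens to contain the direction $B$ --- and one has to justify the passage from $\Re{\tr(AM^*)} = \|A\|_{\textup{tr}}$ to $\tr(AM^*) = \|A\|_{\textup{tr}}$. A more hands-on alternative avoids Hahn--Banach entirely: were (b) to fail, the compact convex set $\{\tr(BM^*) : \|M\| \leq 1,\ \tr(AM^*) = \|A\|_{\textup{tr}}\} \subseteq \mathbb{C}$ would omit $0$, so some unit phase $e^{i\theta}$ would satisfy $\Re{\tr(e^{i\theta}BM^*)} \geq \delta > 0$ on this set; a short compactness argument would then give $\|A - \varepsilon e^{i\theta}B\|_{\textup{tr}} < \|A\|_{\textup{tr}}$ for all sufficiently small $\varepsilon > 0$, contradicting (a).
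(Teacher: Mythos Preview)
The paper does not actually supply a proof of this theorem: it is simply quoted from \cite{li2002orthogonality}, with the added remark that one could alternatively prove it using the subgradient characterisation of the trace norm from \cite{Watson}. So there is no ``paper's own proof'' to compare against; your task was effectively to reconstruct a proof from scratch, and your Hahn--Banach separation argument is a correct and standard way to do it (and is in the spirit of the original Li--Schneider proof as well as the James/Singer characterisation of Birkhoff--James orthogonality via norming functionals).

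There is one small slip worth fixing. When you justify the duality $\sup_{X\in C}\Re{\tr(XM^*)} = r\|M\|$ by ``taking $X = rUV^*$'' from the SVD $M = U\Sigma V^*$, that choice of $X$ is unitary up to the scalar $r$, so $\|X\|_{\textup{tr}} = rn$, which is not in $C$ when $n\geq 2$; moreover $\tr(XM^*)=r\,\tr(\Sigma)=r\|M\|_{\textup{tr}}$ rather than $r\|M\|$. The duality statement itself is of course correct---the trace norm and operator norm are dual---but the witness should be, e.g., $X = r\,\mathbf{u_1}\mathbf{v_1}^*$ where $\mathbf{u_1},\mathbf{v_1}$ are left/right singular vectors for the top singular value of $M$, which has $\|X\|_{\textup{tr}}=r$ and $\tr(XM^*)=r\|M\|$. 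Once this is corrected, the rest of your argument (in particular the nice step forcing $\tr(AM^*)$ to be real from $\Re{\tr(AM^*)}=\|A\|_{\textup{tr}}$ together with $|\tr(AM^*)|\le\|A\|_{\textup{tr}}$) goes through cleanly. Your sketched alternative via separating $0$ from the compact convex image $\{\tr(BM^*):\|M\|\le1,\ \tr(AM^*)=\|A\|_{\textup{tr}}\}$ is also a valid route and is essentially the subgradient approach the paper alludes to.
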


While this result in this form is stated for the complex vector space $M_n$, it applies straightforwardly to the real vector space of Hermitian matrices as well. Indeed, if $A$ is Hermitian and $\tr(AM^*)$ is real, then
\[
    \tr(AM^*)=\tr(AM)=\tr\left(A\left(\frac{M+M^*}{2}\right)\right),
\]
and hence $M$ can be chosen to be Hermitian in this case. We thus only consider Birkhoff--James orthogonality in the \emph{real} vector space of Hermitian matrices from now on.

It is perhaps worth noting that Theorem~\ref{LS1} can be proved in a different way than was done in \cite{li2002orthogonality}: via the subgradient \cite[Theorem 2]{Watson}, which is a useful tool in convex optimization. Indeed, the same is true of many of our upcoming results, but we avoid using the subgradient in this paper, opting instead for more elementary proofs.

\subsection{Arrangement of the Paper}\label{sec:arrangement}

In Section~\ref{sec:main_results}, we present and prove our main result, which characterizes when a Hermitian matrix is Birkhoff--James orthogonal to a positive semidefinite one. In Section~\ref{sec:diagonal}, we apply our main result to the special case of Birkhoff--James orthogonality to diagonal matrices. In particular, we characterize which Hermitian matrices are Birkhoff--James orthogonal to all positive semidefinite diagonal matrices, and we briefly consider the problem of Birkhoff--James orthogonality with the set of \emph{all} (not necessarily positive semidefinite) diagonal matrices.

In Section~\ref{sec:quantum_resource_theories}, we introduce quantum resource theories and show how our results characterize which quantum states are ``most resourceful'' in the sense of the modified trace distance of that resource. Furthermore, in Section~\ref{sec:nppt_be} we show that Birkhoff--James orthogonality of certain matrices is equivalent to a long-standard conjecture from quantum information theory, which states that a certain quantum state is ``bound entangled''. Finally, we close in Section~\ref{sec:conclusions} with some conclusions and open questions related to our work.

\section{Main Result}\label{sec:main_results}

We now explore the question of when a Hermitian matrix is Birkhoff--James orthogonal to a given positive semidefinite matrix. First, however, we need the following well-known inequality that bounds the trace of a product of Hermitian matrices (see \cite[Problem~III.6.14]{Bha97}, for example).

\begin{lemma}\label{firstlem}
    Suppose $H, M \in M_n$ are two Hermitian matrices with eigenvalues $\lambda_1\ge \lambda_2 \ge \cdots \ge \lambda_n$ and $\mu_1\ge \mu_2\ge \cdots \ge \mu_n$, respectively. Then
    \[
        \tr(HM) \le \sum_{j=1}^{n}\lambda_j\mu_j,
    \]
    with equality if and only if there exists an orthonormal basis $\{\mathbf{v_j}\}_{j=1}^n \subset \mathbb{C}^n$ such that, for all $j$, $\mathbf{v_j}$ is an eigenvector of both $H$ and $M$ corresponding to $\lambda_j$ and $\mu_j$, respectively.
\end{lemma}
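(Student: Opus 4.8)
The plan is to diagonalize $M$, rewrite the trace as a sum weighted by the sorted eigenvalues of $M$, apply Abel summation to convert that sum into a non-negative combination of partial sums of the form $\tr(P_k H)$, and bound each such partial sum using Ky Fan's maximum principle; the equality characterization then emerges by tracking exactly when each of those estimates is tight.

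Concretely, I would fix an orthonormal eigenbasis $\{\mathbf{w_k}\}_{k=1}^n$ of $M$ with $M\mathbf{w_k}=\mu_k\mathbf{w_k}$, set $a_k \defeq \langle \mathbf{w_k}, H\mathbf{w_k}\rangle$ and $P_k \defeq \sum_{i=1}^k \ketbra{w_i}{w_i}$, so that $\tr(HM)=\sum_{k=1}^n \mu_k a_k$ and $\sum_{i=1}^k a_i = \tr(P_k H)$. Abel summation gives
\[
    \tr(HM) \;=\; \mu_n\tr(H) + \sum_{k=1}^{n-1}(\mu_k-\mu_{k+1})\tr(P_k H).
\]
Ky Fan's maximum principle (see \cite{Bha97}) says $\tr(P_k H)\le\sum_{i=1}^k\lambda_i$ for every rank-$k$ orthogonal projection $P_k$, while $\tr(H)=\sum_{i=1}^n\lambda_i$; since each coefficient $\mu_k-\mu_{k+1}$ is non-negative, substituting these bounds and running Abel summation in reverse yields $\tr(HM)\le\sum_{j=1}^n\lambda_j\mu_j$.

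For the equality statement, the "if" direction is immediate: if $\{\mathbf{v_j}\}$ is a common orthonormal eigenbasis with $H\mathbf{v_j}=\lambda_j\mathbf{v_j}$ and $M\mathbf{v_j}=\mu_j\mathbf{v_j}$, then $\tr(HM)=\sum_j\langle\mathbf{v_j},HM\mathbf{v_j}\rangle=\sum_j\lambda_j\mu_j$. For "only if", equality in the displayed identity forces $\tr(P_k H)=\sum_{i=1}^k\lambda_i$ for every $k<n$ with $\mu_k>\mu_{k+1}$. The equality case of Ky Fan's principle then tells us $\mathrm{range}(P_k)$ is $H$-invariant for each such $k$ (and trivially for $k=0$ and $k=n$). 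Grouping the indices into the consecutive blocks on which $\mu$ is constant, each eigenspace of $M$ is the orthogonal complement of one such $H$-invariant subspace inside the next larger one, hence is itself $H$-invariant because $H$ is Hermitian. Diagonalizing $H$ within each eigenspace of $M$ and collecting the resulting vectors produces the desired common orthonormal eigenbasis, after reordering so the eigenvalues are non-increasing.

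The main obstacle is the equality analysis: one must invoke and, for self-containedness, prove the equality case of Ky Fan's principle — namely that $\tr(PH)=\sum_{i=1}^k\lambda_i$ for a rank-$k$ projection forces $\mathrm{range}(P)$ to be $H$-invariant — which I would do by a short induction on $k$, using that a unit vector maximizing $\langle\mathbf{v},H\mathbf{v}\rangle$ over a subspace, if it attains $\lambda_1$, is a genuine top eigenvector of $H$, then splitting off its span and passing to $H$ restricted to its orthogonal complement. The boundary bookkeeping when $H$ or $M$ has repeated eigenvalues also needs a little care. An alternative route for the inequality is von Neumann's argument — write $\tr(HM)=\sum_{i,j}S_{ij}\lambda_i\mu_j$ where $S_{ij}=|W_{ij}|^2$ for $W$ the unitary relating the two eigenbases, observe that $S$ is doubly stochastic, and maximize this linear functional over the Birkhoff polytope via the rearrangement inequality — but extracting the equality condition is messier there, so I would keep the Ky Fan approach.
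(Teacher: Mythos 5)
Your argument is correct. Note, though, that the paper does not prove this lemma at all: it is stated as a well-known fact with a pointer to \cite[Problem~III.6.14]{Bha97}, so there is no in-paper proof to compare against. Your route --- diagonalize $M$, Abel-sum against the partial traces $\tr(P_kH)$, and invoke Ky Fan's maximum principle together with its equality case --- is a standard and fully valid way to make the result self-contained, and it correctly isolates the part that actually matters downstream (in Lemma~2 of the paper), namely the equality characterization. The key steps all check out: equality forces $\tr(P_kH)=\sum_{i=1}^k\lambda_i$ only at indices $k$ where $\mu_k>\mu_{k+1}$, which is exactly enough because those are the indices at which the eigenspaces of $M$ are cut out as orthogonal complements of nested $H$-invariant subspaces; and the multiset of eigenvalues of $H$ restricted to each such block is pinned down by the trace equalities (a size-$k$ sub-multiset of the spectrum summing to $\sum_{i=1}^k\lambda_i$ must equal $\{\lambda_1,\dots,\lambda_k\}$), which is what guarantees the final pairing of $\lambda_j$ with $\mu_j$ after sorting within blocks. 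You rightly flag that the equality case of Ky Fan's principle and the repeated-eigenvalue bookkeeping are the places needing care; neither is a gap, just detail to be written out. The alternative von Neumann/Birkhoff-polytope route you mention would also give the inequality, but as you observe, the equality analysis is cleaner the way you have done it.
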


We also need one more lemma before we will be able to state and prove our main result:

\begin{lemma}\label{secondlem}
    Suppose $H, M \in M_n$ are two Hermitian matrices, and let $P^+$ and $P^-$ denote the orthogonal projections onto the strictly positive and strictly negative eigenspaces of $H$, respectively. Then $2P^+-I\preceq M\preceq I-2P^-$ if and only if $\|M\|\le 1$ and $\tr(HM)=\Vert H\Vert_{\textup{tr}}$.
\end{lemma}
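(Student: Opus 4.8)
I want to prove Lemma~\ref{secondlem}, which characterizes when a Hermitian matrix $M$ satisfies $\|M\| \le 1$ and $\tr(HM) = \|H\|_{\textup{tr}}$ in terms of the operator-interval condition $2P^+ - I \preceq M \preceq I - 2P^-$. My plan is to work in the eigenbasis of $H$ and analyze the two directions separately, using Lemma~\ref{firstlem} to handle the equality case of the trace inequality.

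First I would set up coordinates: write $H = \sum_j \lambda_j \mathbf{v_j}\mathbf{v_j}^*$ in an orthonormal eigenbasis, with $\lambda_1 \ge \cdots \ge \lambda_n$, and note that $\|H\|_{\textup{tr}} = \sum_j |\lambda_j| = \sum_{\lambda_j > 0}\lambda_j - \sum_{\lambda_j < 0}\lambda_j$, which is exactly $\tr\big(H(P^+ - P^-)\big)$. Note also that $2P^+ - I = P^+ - P_0 - P^-$ and $I - 2P^- = P^+ + P_0 - P^-$, so the claimed condition says $|\langle\mathbf{x}|M|\mathbf{x}\rangle| \le \|\mathbf{x}\|^2$ whenever $\mathbf{x}$ is in the kernel of $H$, and more precisely that $-\langle\mathbf{x}|(P^+ - P^-)|\mathbf{x}\rangle \le \langle\mathbf{x}|M|\mathbf{x}\rangle \le \langle\mathbf{x}|(P^+ - P^-)|\mathbf{x}\rangle$ has to fail... wait, I mean the condition is $P^+ - P_0 - P^- \preceq M \preceq P^+ + P_0 - P^-$. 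The key observation is that $S := P^+ - P^-$ (the "sign matrix" of $H$) commutes with $H$ and satisfies $-I \preceq S \preceq I$, with $S = I$ on $\operatorname{range}(P^+)$ and $S = -I$ on $\operatorname{range}(P^-)$.

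For the forward direction ($\Rightarrow$): assume $S - P_0 \preceq M \preceq S + P_0$. Since $-I \preceq S - P_0$ and $S + P_0 \preceq I$, this gives $-I \preceq M \preceq I$, i.e. $\|M\| \le 1$. For the trace equality: on $\operatorname{range}(P^+)$ we have $M \succeq S - P_0 = I$ combined with $M \preceq I$, forcing $P^+ M P^+ = P^+$; similarly $P^- M P^- = -P^-$. Then I'd compute $\tr(HM) = \tr(H P^+ M P^+) + \tr(H P^- M P^-)$ (the $P_0$ and cross terms vanish because $HP_0 = 0$ and $H = HP^+ + HP^-$, $HP^+$ has range in $\operatorname{range}(P^+)$, etc.) $= \tr(HP^+) - \tr(HP^-) = \|H\|_{\textup{tr}}$.

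For the reverse direction ($\Leftarrow$): assume $\|M\| \le 1$ and $\tr(HM) = \|H\|_{\textup{tr}}$. Apply Lemma~\ref{firstlem} with eigenvalues of $M$ being $\mu_1 \ge \cdots \ge \mu_n \in [-1,1]$: we get $\|H\|_{\textup{tr}} = \tr(HM) \le \sum_j \lambda_j \mu_j \le \sum_j \lambda_j \cdot (\text{best choice})$. Since $|\mu_j| \le 1$, $\sum_j \lambda_j \mu_j \le \sum_j |\lambda_j| = \|H\|_{\textup{tr}}$, so both inequalities are equalities. Equality in $\sum_j \lambda_j \mu_j \le \sum_j |\lambda_j|$ forces $\mu_j = 1$ whenever $\lambda_j > 0$ and $\mu_j = -1$ whenever $\lambda_j < 0$. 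Equality in Lemma~\ref{firstlem} gives a common orthonormal eigenbasis $\{\mathbf{v_j}\}$. Hence on $\operatorname{range}(P^+)$, $M$ acts as $I$, on $\operatorname{range}(P^-)$ it acts as $-I$, and on $\operatorname{range}(P_0)$ the eigenvalues $\mu_j$ lie in $[-1,1]$ so $-P_0 \preceq P_0 M P_0 \preceq P_0$. Writing $M = P^+ + (P_0 M P_0) - P^-$ (block-diagonal in the common eigenbasis), one checks directly that $S - P_0 = P^+ - P_0 - P^- \preceq M \preceq P^+ + P_0 - P^- = S + P_0$.

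**Main obstacle.** The delicate point is the equality analysis in the reverse direction: I need to be careful that equality in Lemma~\ref{firstlem} actually produces a *common* eigenbasis rather than just separate spectral data, and then to correctly identify which eigenvectors of $M$ (among possible degeneracies of $\mu_j = \pm 1$) lie in $\operatorname{range}(P^+)$ versus $\operatorname{range}(P^-)$ versus $\operatorname{range}(P_0)$. When $\lambda_j = 0$, Lemma~\ref{firstlem}'s equality condition still forces $\mathbf{v_j}$ to be a shared eigenvector but does not pin down $\mu_j$, which is exactly what we want (freedom on the kernel). A minor subtlety is handling multiplicities: if $\lambda_j = \lambda_{j+1}$ the pairing of eigenvectors is not unique, but since $M$ is constant ($I$ or $-I$) on each of $\operatorname{range}(P^+)$ and $\operatorname{range}(P^-)$ the conclusion is basis-independent there, so this causes no real difficulty. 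I expect the whole argument to be short once the bookkeeping with $P^+, P_0, P^-$ and the common eigenbasis is set up cleanly.
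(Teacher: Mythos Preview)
Your proposal is correct and follows essentially the same approach as the paper: both arguments hinge on Lemma~\ref{firstlem} to analyze equality in $\tr(HM)\le\sum_j\lambda_j\mu_j\le\sum_j|\lambda_j|$, concluding that $M$ shares an eigenbasis with $H$ and acts as $+1$ on $\operatorname{range}(P^+)$ and $-1$ on $\operatorname{range}(P^-)$, which is exactly the interval condition $2P^+-I\preceq M\preceq I-2P^-$. Your forward direction is spelled out more directly (showing $MP^+=P^+$, $MP^-=-P^-$ from the operator inequalities and $\|M\|\le 1$) rather than wrapped into the paper's single ``iff'' chain, but the content is the same.
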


\begin{proof}
    Let $\lambda_1\ge \lambda_2 \ge \cdots \ge \lambda_n$ and $\mu_1\ge \mu_2\ge \cdots \ge \mu_n$ denote the eigenvalues of $H$ and $M$, respectively. Then $\|M\|\le 1$ if and only if $-1\le \mu_j \le 1$ for all $j$. If this condition holds then Lemma~\ref{firstlem} tells us that
    \[
        \tr(HM)\le \sum_{j=1}^{n}\lambda_j\mu_j\le \sum_{j=1}^{n}\vert \lambda_j \vert=\Vert H\Vert_{\textup{tr}}.
    \]
    We thus have $\tr(HM)=\Vert H\Vert_{\textup{tr}}$ if and only if both of the previous two inequalities are equalities. The first inequality is an equality if and only if there exists an orthonormal basis $\{\mathbf{v_j}\}_{j=1}^n \subset \mathbb{C}^n$ such that, for all $j$, $\mathbf{v_j}$ is an eigenvector of both $H$ and $M$ corresponding to $\lambda_j$ and $\mu_j$, respectively. The second inequality is an equality if and only if $\mu_j=1$ whenever $\lambda_j>0$ and $\mu_j=-1$ whenever $\lambda_j<0$. These two conditions together are equivalent to $2P^+-I\preceq M\preceq I-2P^-$, as claimed.
\end{proof}

We now have enough machinery to state and prove our main result:

\begin{theorem}\label{thm:main_res}
    Suppose $H,B \in M_n$ are Hermitian and $B$ is positive semidefinite. Let $P^+$ and $P^-$ denote the orthogonal projections onto the strictly positive and strictly negative eigenspaces of $H$, respectively. Then $H$ is Birkhoff--James orthogonal to $B$ in the trace norm if and only if both
    \[
        \tr(BP^+)\le \frac{1}{2}\tr(B) \quad \text{and} \quad \tr(BP^-)\le \frac{1}{2}\tr(B).
    \]
\end{theorem}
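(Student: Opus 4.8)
The plan is to reduce Birkhoff--James orthogonality to the existence of a Hermitian ``witness'' matrix via Theorem~\ref{LS1} and Lemma~\ref{secondlem}, and then solve that existence problem by an explicit one-parameter construction. Concretely, the Hermitian form of Theorem~\ref{LS1} says that $H$ is Birkhoff--James orthogonal to $B$ if and only if there is a Hermitian $M$ with $\|M\|\le 1$, $\tr(HM)=\|H\|_{\textup{tr}}$, and $\tr(BM)=0$, and Lemma~\ref{secondlem} lets us replace the first two of these conditions by the single matrix sandwich $2P^+-I\preceq M\preceq I-2P^-$. So the theorem follows once we show that a Hermitian $M$ with
\[
    2P^+-I\preceq M\preceq I-2P^- \quad\text{and}\quad \tr(BM)=0
\]
exists if and only if $\tr(BP^+)\le\tfrac12\tr(B)$ and $\tr(BP^-)\le\tfrac12\tr(B)$.

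For necessity I would simply pair $B\succeq 0$ against the two inequalities satisfied by such an $M$: since $(I-2P^-)-M\succeq 0$ and $B\succeq 0$ we get $0\le \tr\!\big(B((I-2P^-)-M)\big)=\tr(B)-2\tr(BP^-)-\tr(BM)=\tr(B)-2\tr(BP^-)$, and symmetrically $M-(2P^+-I)\succeq 0$ yields $\tr(B)-2\tr(BP^+)\ge 0$. This direction is short and essentially forced.

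For sufficiency, which I expect to be the main obstacle, the key observation is that it suffices to look for $M$ of the restricted form $M=P^+-P^-+tP_0$ with $t\in[-1,1]$. For such $M$ one computes $M-(2P^+-I)=(1+t)P_0\succeq 0$ and $(I-2P^-)-M=(1-t)P_0\succeq 0$, so the matrix sandwich holds automatically, while $\tr(BM)=\tr(BP^+)-\tr(BP^-)+t\,\tr(BP_0)$. Hence we only need a value $t\in[-1,1]$ making this last expression vanish, which exists exactly when $|\tr(BP^+)-\tr(BP^-)|\le\tr(BP_0)$ (with $M=P^+-P^-$ handling the degenerate case $\tr(BP_0)=0$, in which case the hypotheses force $\tr(BP^+)=\tr(BP^-)$). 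The proof then closes with an elementary arithmetic step: writing $a=\tr(BP^+)$, $b=\tr(BP_0)$, $c=\tr(BP^-)$, we have $a,b,c\ge 0$ and $a+b+c=\tr(B)$, so $a\le\tfrac12(a+b+c)\iff a-c\le b$ and $c\le\tfrac12(a+b+c)\iff c-a\le b$; thus the two hypothesized trace inequalities together are equivalent to $|a-c|\le b$, precisely the condition needed for the construction. The only points requiring care are the degenerate case $b=0$ and the remark that restricting to $M$ of this special block-constant form loses no generality, since we only need \emph{existence} of a witness rather than a description of all of them.
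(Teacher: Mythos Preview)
Your proof is correct and follows essentially the same route as the paper: both reduce to finding a Hermitian $M$ with $2P^+-I\preceq M\preceq I-2P^-$ and $\tr(BM)=0$ via Theorem~\ref{LS1} and Lemma~\ref{secondlem}, then build the witness explicitly. Your parametrization $M=P^+-P^-+tP_0$ and the paper's convex combination $M=\frac{1}{\alpha+\beta}\big(\beta(2P^+-I)+\alpha(I-2P^-)\big)$ are in fact the same one-parameter segment (a short computation shows the paper's $M$ equals $P^+-P^-+\frac{\alpha-\beta}{\alpha+\beta}P_0$), so the two arguments differ only in presentation.
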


\begin{proof}
    Theorem \ref{LS1} and Lemma~\ref{secondlem} together imply that  $H$ is Birkhoff--James orthogonal to $B$ in the trace norm if and only if there exists a Hermitian matrix $M$ with $2P^+-I\preceq M\preceq I-2P^-$ and $\tr(BM)=0$. These conditions imply that both $\tr(B(I-2P^+))\ge 0$ and $\tr(B(I-2P^-))\ge 0$, which is equivalent to $\tr(BP^+)\le \frac{1}{2}\tr(B)$  and $\tr(BP^-)\le \frac{1}{2}\tr(B)$ by the linearity of the trace.
    
    Conversely, suppose we have both $\tr(BP^+)\le \frac{1}{2}\tr(B)$ and $\tr(BP^-)\le \frac{1}{2}\tr(B)$. If either $\tr(B(I-2P^+))=0$ or $\tr(B(I-2P^-))=0$, we choose $M=2P^+-I$ or $M=I-2P^-$, respectively. On the other hand, if we have $\alpha := \tr(B(I-2P^+)) > 0$ and $\beta := \tr(B(I-2P^-)) > 0$, we choose
    \[
        M=\frac{1}{\alpha+\beta}\big(\beta(2P^+-I)+\alpha(I-2P^-)\big).
    \]
    In either case, we then have $2P^+-I\preceq M\preceq I-2P^-$ and $\tr(BM)=0$, which means that $H$ is Birkhoff--James orthogonal to $B$ in the trace norm.
\end{proof}

\section{Birkhoff--James Orthogonality to Diagonal Matrices}\label{sec:diagonal}

We now consider the problem of determining which Hermitian matrices are Birkhoff--James orthogonal to every diagonal matrix. This problem was considered for the operator norm in \cite{ALRV12}, and was solved in small dimensions there, whereas we consider the trace norm version of it. We start with a result that determines when a Hermitian matrix is Birkhoff--James orthogonal  in the trace norm to all \emph{positive semidefinite} diagonal matrices:

\begin{theorem}\label{thm:optimal_ek_rephrase}
     Suppose $H \in M_n$ is Hermitian, and let $P^+$ and $P^-$ be the orthogonal projections onto its strictly positive and negative eigenspaces, respectively. Then the following are equivalent:
     
     \begin{itemize}
         \item[(a)] $H$ is Birkhoff--James orthogonal in the trace norm to every positive semidefinite diagonal matrix.
         
         \item[(b)] $H$ is Birkhoff--James orthogonal in the trace norm to $\mathbf{e_j}\mathbf{e}_{\mathbf{j}}^*$ for all $1 \leq j \leq n$.
         
         \item[(c)] $P^{+}_{j,j} \leq 1/2$ and $P^{-}_{j,j} \leq 1/2$ for all $1 \leq j \leq n$.
     \end{itemize}
\end{theorem}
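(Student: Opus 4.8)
The plan is to prove the chain of implications $(a) \Rightarrow (b) \Rightarrow (c) \Rightarrow (a)$, leaning on Theorem~\ref{thm:main_res} as the workhorse that translates Birkhoff--James orthogonality to a given positive semidefinite matrix into a pair of trace inequalities.

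First I would dispose of $(a) \Rightarrow (b)$ immediately: each matrix $\mathbf{e_j}\mathbf{e}_{\mathbf{j}}^*$ is itself a positive semidefinite diagonal matrix, so this is a trivial specialization. For $(b) \Rightarrow (c)$, I would apply Theorem~\ref{thm:main_res} with $B = \mathbf{e_j}\mathbf{e}_{\mathbf{j}}^*$. Here $\tr(B) = 1$, and $\tr(BP^+) = \mathbf{e}_{\mathbf{j}}^* P^+ \mathbf{e_j} = P^+_{j,j}$, and likewise $\tr(BP^-) = P^-_{j,j}$. So Theorem~\ref{thm:main_res} says precisely that $H$ is Birkhoff--James orthogonal to $\mathbf{e_j}\mathbf{e}_{\mathbf{j}}^*$ if and only if $P^+_{j,j} \le 1/2$ and $P^-_{j,j} \le 1/2$; quantifying over $j$ gives $(c)$ (and in fact this argument shows $(b) \Leftrightarrow (c)$ directly).

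The substantive step is $(c) \Rightarrow (a)$. I would take an arbitrary positive semidefinite diagonal matrix $D = \mathrm{diag}(d_1, \dots, d_n)$ with each $d_j \ge 0$, and verify the two conditions of Theorem~\ref{thm:main_res}. The key observation is that $\tr(D P^+) = \sum_{j=1}^n d_j P^+_{j,j}$ and $\tr(D) = \sum_{j=1}^n d_j$, so condition $(c)$ gives
\[
    \tr(D P^+) = \sum_{j=1}^n d_j P^+_{j,j} \le \sum_{j=1}^n d_j \cdot \frac{1}{2} = \frac{1}{2}\tr(D),
\]
and symmetrically $\tr(D P^-) \le \frac{1}{2}\tr(D)$. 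By Theorem~\ref{thm:main_res}, $H$ is Birkhoff--James orthogonal to $D$, and since $D$ was arbitrary we obtain $(a)$.

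I do not anticipate a serious obstacle here; the proof is essentially a matter of unwinding Theorem~\ref{thm:main_res} in the special case $B = \mathbf{e_j}\mathbf{e}_{\mathbf{j}}^*$ and then using nonnegativity of the $d_j$ to extend from the rank-one diagonal matrices to arbitrary positive semidefinite diagonal matrices via a convex-combination (really just linearity) argument. The only thing to be careful about is the degenerate case $\tr(D) = 0$, i.e. $D = 0$, but then the Birkhoff--James condition $\|H\|_{\textup{tr}} \le \|H + \lambda \cdot 0\|_{\textup{tr}}$ holds trivially, so it can be handled in a single sentence or simply absorbed into the inequalities above.
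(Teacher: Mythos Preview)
Your proposal is correct and follows essentially the same approach as the paper: both reduce everything to Theorem~\ref{thm:main_res}, first with $B = \mathbf{e_j}\mathbf{e}_{\mathbf{j}}^*$ to get $(b)\Leftrightarrow(c)$, and then with a general positive semidefinite diagonal $B$ to get $(c)\Rightarrow(a)$. The paper phrases the latter step by normalizing $B$ to have trace $1$, whereas you write out the sum $\sum_j d_j P^{+}_{j,j}$ directly; these are the same computation, and your explicit handling of the $D=0$ case is a harmless addition.
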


\begin{proof}
    The equivalence of (b) and (c) follows from choosing $B = \mathbf{e_j}\mathbf{e}_{\mathbf{j}}^*$ in Theorem~\ref{thm:main_res}. The equivalence of (a) and (c) similarly follows from choosing $B$ to be an arbitrary diagonal positive semidefinite matrix scaled so that $\tr(B) = 1$ in Theorem~\ref{thm:main_res}.
\end{proof}

We now start looking at the more difficult problem of characterizing which Hermitian matrices are Birkhoff--James orthogonal to \emph{all} (not necessarily positive semidefinite) diagonal matrices. Theorem~\ref{thm:optimal_ek_rephrase} provides a necessary condition, but it is not a sufficient one when $n \geq 3$, as demonstrated by the following example:

\begin{example}\label{exam:higher_dim_bj_orth}
    The Hermitian matrix
    \[
        H = \begin{bmatrix}
        -1 &  5 & 2 \\
         5 & -1 & 2 \\
         2 &  2 & 2
        \end{bmatrix}
    \]
    has projections onto its negative and positive eigenspaces
    \[
        P^{-} = \frac{1}{2}\begin{bmatrix}
            1 & -1 & 0 \\ -1 & 1 & 0 \\ 0 & 0 & 0
        \end{bmatrix} \quad \text{and} \quad P^{+} = \frac{1}{3}\begin{bmatrix}
            1 & 1 & 1 \\ 1 & 1 & 1 \\ 1 & 1 & 1
        \end{bmatrix},
    \]
    respectively. It follows that $P_{j,j}^{-} \leq 1/2$ and $P_{j,j}^{+} \leq 1/2$ for all $1 \leq j \leq 3$, so Theorem~\ref{thm:optimal_ek_rephrase} tells us that $H$ is Birkhoff--James orthogonal to every positive semidefinite diagonal matrix.
    
    However, $H$ is not Birkhoff--James orthogonal to \emph{every} diagonal matrix. To see this, we can compute $\|H\|_{\textup{tr}} = 12$, but if $D = \mathrm{diag}(6,6,-6/5)$ then $\|H + D\|_{\textup{tr}} = 54/5 < 12$ (and in fact, semidefinite programming can be used to show that this $D$ is optimal, so $H + D$ is Birkhoff--James orthogonal to every diagonal matrix).
\end{example}

On the other hand, if $n = 2$ then the condition of Theorem~\ref{thm:optimal_ek_rephrase} is both necessary \emph{and} sufficient for Birkhoff--James orthogonality to every diagonal matrix, as we will see shortly. Our starting point towards proving this fact is the following simple corollary, which solves this problem for \emph{invertible} Hermitian matrices:

\begin{corollary}\label{cor:trace_min_even_dim}
    Suppose $H \in M_n$ is Hermitian and invertible, and let $P^+$ and $P^-$ be the orthogonal projections onto its positive and negative eigenspaces, respectively. The following are equivalent:
    \begin{itemize}
        \item[(a)] $H$ is Birkhoff--James orthogonal in the trace norm to $\mathbf{e_j}\mathbf{e}_{\mathbf{j}}^*$ for all $1 \leq j \leq n$.
        
        \item[(b)] $H$ is Birkhoff--James orthogonal in the trace norm to every diagonal matrix.
        
        \item[(c)] $P^{+} = (I + U)/2$ and $P^{-} = (I-U)/2$ for some Hermitian unitary $U$ with zeros on its diagonal.
    
        \item[(d)] $P_{j,j}^{+} = P_{j,j}^{-} = 1/2$ for all $1 \leq j \leq n$.
    \end{itemize}
\end{corollary}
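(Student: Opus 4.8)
The plan is to collapse all four conditions to the single statement that $U := P^+ - P^-$ has zero diagonal, exploiting the fact that invertibility of $H$ pins down the matrix $M$ of Theorem~\ref{LS1} uniquely. First I would record the structural consequences of invertibility: since $H$ has no zero eigenvalue, $P^+ + P^- = I$, so $U = 2P^+ - I = I - 2P^-$, and—using $P^+P^- = P^-P^+ = 0$—one checks $U^2 = P^+ + P^- = I$. Thus $U$ is a Hermitian unitary with $P^\pm = (I \pm U)/2$ holding automatically, so condition (c) is exactly the assertion that $U$ has zero diagonal. Moreover (d) says $P^+_{j,j} = P^-_{j,j} = 1/2$ for all $j$, and since $P^+_{j,j} + P^-_{j,j} = 1$ while $U_{j,j} = P^+_{j,j} - P^-_{j,j}$, this too is equivalent to $U$ having zero diagonal. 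Hence (c) $\Leftrightarrow$ (d), and it remains to link these to (a) and (b).

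Next I would combine Theorem~\ref{LS1}, the reduction to Hermitian $M$, and Lemma~\ref{secondlem}: for Hermitian $H$ and Hermitian $B$, $H$ is Birkhoff--James orthogonal to $B$ if and only if there is a Hermitian $M$ with $2P^+ - I \preceq M \preceq I - 2P^-$ and $\tr(BM) = 0$. The key observation is that when $H$ is invertible both ends of this operator inequality equal $U$, so the inequality forces $M = U$; hence $H$ is Birkhoff--James orthogonal to $B$ precisely when $\tr(BU) = 0$. Applying this with $B = \mathbf{e_j}\mathbf{e_j}^*$ shows that (a) is equivalent to $U_{j,j} = \tr(\mathbf{e_j}\mathbf{e_j}^* U) = 0$ for all $j$, i.e.\ $U$ has zero diagonal; and since $\tr(DU) = \sum_j D_{j,j} U_{j,j}$ for every diagonal $D$, the same condition is equivalent to (b). Chaining these equivalences completes the proof.

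I do not expect a serious obstacle: the one real idea is noticing that invertibility collapses the ``sandwich'' of Lemma~\ref{secondlem} to a single matrix $M = U$ (without invertibility, $M$ would be unconstrained on the kernel of $H$, leaving genuine freedom there). The only mild subtlety worth stating carefully is that the passage from ``positive semidefinite diagonal'' in Theorem~\ref{thm:optimal_ek_rephrase} to ``all diagonal'' in (b) is legitimate; this is handled automatically because Theorem~\ref{LS1} imposes no positivity on $B$, so the characterization $\tr(BU) = 0$ is valid for every Hermitian—in particular every diagonal—matrix $B$.
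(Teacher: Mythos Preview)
Your proof is correct and rests on the same underlying idea as the paper's---invertibility forces $P^+ + P^- = I$, hence $U = P^+ - P^-$ is a Hermitian unitary and the whole corollary collapses to ``$U$ has zero diagonal.'' The organization differs, though: the paper chains through its intermediate results, using Theorem~\ref{thm:optimal_ek_rephrase} for (a)$\Leftrightarrow$(d) and invoking Theorem~\ref{thm:main_res} for (d)$\Rightarrow$(b), whereas you go straight to the primitive ingredients Theorem~\ref{LS1} and Lemma~\ref{secondlem} and observe that the sandwich $2P^+ - I \preceq M \preceq I - 2P^-$ degenerates to the single point $M = U$. Your route is a bit cleaner and, incidentally, sidesteps a small wrinkle in the paper's argument: Theorem~\ref{thm:main_res} is stated only for positive semidefinite $B$, so citing it for an arbitrary diagonal $B$ in the (d)$\Rightarrow$(b) step requires tacitly reaching back into its proof, exactly as you do explicitly.
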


\begin{proof}
    To see that (a) and (d) are equivalent, recall from Theorem~\ref{thm:optimal_ek_rephrase} that $H$ is Birkhoff--James orthogonal to each $\mathbf{e_j}\mathbf{e}_{\mathbf{j}}^*$ if and only if $P_{j,j}^{+} \leq 1/2$ and $P_{j,j}^{-} \leq 1/2$ for all $1 \leq j \leq n$. Since $H$ is invertible, $P^{+} + P^{-} = I$, so $P_{j,j}^{+} + P_{j,j}^{-} = 1$ for all $j$. It follows that $P_{j,j}^{+} \leq 1/2$ and $P_{j,j}^{-} \leq 1/2$ is equivalent to $P_{j,j}^{+} = P_{j,j}^{-} = 1/2$, so (a) is equivalent to (d).
    
    The fact that (c) implies (d) is trivial. To see that (d) implies (c), notice that (d) implies we can write $P^{+} = I/2 + X$ for some matrix $X$ with diagonal entries equal to $0$. Since $P^{+}$ is Hermitian, so is $X$, and since $P^{+}$ is a projection, we have $(I/2 + X)^2 = I/2 + X$, so $X^2 = I/4$, so $U := X/2$ is unitary. Then $P^{+} = (I + U)/2$, and since $P^{+} + P^{-} = I$ we must have $P^{-} = (I-U)/2$.
    
    The fact that (b) implies (a) is trivial. To complete the proof, we show that (d) implies (b) as follows. If (d) holds and $B$ is diagonal, then it must be the case that $\tr(BP^{+}) = (1/2)\tr(B)$ and $\tr(BP^{-}) = (1/2)\tr(B)$. It follows from Theorem~\ref{thm:main_res} that $H$ is Birkhoff--James orthogonal to $B$ in the trace norm, completing the proof.
\end{proof}

\begin{theorem}\label{thm:b_j_orthog_2dim}
     A Hermitian matrix $H \in M_2$ is Birkhoff--James orthogonal to \emph{every} diagonal matrix if and only if it is Birkhoff--James orthogonal to $\mathbf{e_1}\mathbf{e}_{\mathbf{1}}^*$ and $\mathbf{e_2}\mathbf{e}_{\mathbf{2}}^*$.
\end{theorem}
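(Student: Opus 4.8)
The forward implication is immediate from Definition~\ref{defn:bj_orth}. For the converse, suppose $H \in M_2$ is Birkhoff--James orthogonal to $\mathbf{e_1}\mathbf{e}_{\mathbf{1}}^*$ and $\mathbf{e_2}\mathbf{e}_{\mathbf{2}}^*$. If $H$ is invertible then Corollary~\ref{cor:trace_min_even_dim} already gives the conclusion, and if $H = 0$ the conclusion is trivial, so the only case requiring work is $\rank(H) = 1$. The plan is to show that in this case the hypothesis forces $H$ into a very rigid form, after which Birkhoff--James orthogonality to every diagonal matrix can be verified by a one-line computation using the explicit formula for the trace norm of a $2 \times 2$ Hermitian matrix.

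First I would reduce to the positive semidefinite case: replacing $H$ by $-H$ changes neither $\|H\|_{\textup{tr}}$ nor the set $\{H + D : D \text{ diagonal}\}$, and it interchanges $P^+$ and $P^-$, so $H$ satisfies the hypothesis (resp.\ the conclusion) if and only if $-H$ does. Hence we may assume $H = \lambda\,\mathbf{v}\mathbf{v}^*$ with $\lambda > 0$ and $\|\mathbf{v}\| = 1$, so that $P^+ = \mathbf{v}\mathbf{v}^*$ and $P^- = 0$. By Theorem~\ref{thm:optimal_ek_rephrase}, the hypothesis is equivalent to $P^+_{j,j} = |v_j|^2 \le 1/2$ for $j = 1,2$; combined with $|v_1|^2 + |v_2|^2 = 1$ this forces $|v_1|^2 = |v_2|^2 = 1/2$.

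Next I would use the elementary identity that a $2 \times 2$ Hermitian matrix $\left[\begin{smallmatrix} a & c \\ \bar c & b \end{smallmatrix}\right]$ has eigenvalues $\tfrac12\bigl((a+b) \pm \sqrt{(a-b)^2 + 4|c|^2}\bigr)$, hence trace norm $\max\!\bigl(|a+b|,\, \sqrt{(a-b)^2 + 4|c|^2}\bigr)$. For $H$ itself, $a = b = \lambda/2$ and $|c|^2 = \lambda^2 |v_1|^2 |v_2|^2 = \lambda^2/4$, so $\|H\|_{\textup{tr}} = \lambda$. For an arbitrary real diagonal $D = \mathrm{diag}(d_1, d_2)$ the off-diagonal entry of $H + D$ is unchanged, so the same formula gives $\|H + D\|_{\textup{tr}} \ge \sqrt{(d_1 - d_2)^2 + \lambda^2} \ge \lambda = \|H\|_{\textup{tr}}$. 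Since $\{\lambda B : B \text{ diagonal},\ \lambda \in \mathbb{R}\}$ is exactly the set of all diagonal matrices, this shows $H$ is Birkhoff--James orthogonal to every diagonal matrix.

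There is no serious obstacle here; the one point to be careful about is recognizing that Corollary~\ref{cor:trace_min_even_dim} covers only the invertible case, so the singular (necessarily rank-one, after discarding $H=0$) case must be treated separately, and that in dimension $2$ it can be handled completely explicitly. I would double-check the reduction $H \mapsto -H$ and the claim that the hypothesis ``flattens'' the eigenvector, since these are the places where an oversight could creep in.
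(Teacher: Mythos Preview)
Your proof is correct and follows essentially the same route as the paper: the ``only if'' direction is trivial, the invertible case is handled by Corollary~\ref{cor:trace_min_even_dim}, and in the rank-one case the hypothesis forces $|v_1|^2=|v_2|^2=1/2$. The only difference is in the final verification for the rank-one case: the paper exhibits the duality witness $M=e^{i\theta}\mathbf{e_1}\mathbf{e}_{\mathbf{2}}^*+e^{-i\theta}\mathbf{e_2}\mathbf{e}_{\mathbf{1}}^*$ and invokes Theorem~\ref{LS1}, whereas you verify $\|H+D\|_{\textup{tr}}\ge\|H\|_{\textup{tr}}$ directly via the closed-form $2\times 2$ eigenvalue formula---both are short, and your version has the minor advantage of not needing the duality theorem at this step.
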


\begin{proof}
    The ``only if'' direction is trivial, and Corollary~\ref{cor:trace_min_even_dim} gives the result when $H$ is invertible, so we only prove the ``if'' direction in the case when $\rank(H) = 1$.
    
    In this case, $H$ has the form $H = \pm \mathbf{v}\mathbf{v}^*$ for some $\mathbf{v} \in \mathbb{C}^2$ (without loss of generality, we will assume that $H = \mathbf{v}\mathbf{v}^*$). Then $P^{+} = \mathbf{v}\mathbf{v}/\|\mathbf{v}\|^2$, and we see that if $P_{j,j}^{+} \leq 1/2$ for $j = 1,2$, then in fact we must have $P_{j,j}^{+} = 1/2$ for $j = 1,2$. It follows that
    \[
        H = \frac{\|\mathbf{v}\|^2}{2}\begin{bmatrix}
            1 & e^{i\theta} \\ e^{-i\theta} & 1
        \end{bmatrix}
    \]
    for some $\theta \in \mathbb{R}$. We can then choose $M = e^{i\theta}\mathbf{e_1}\mathbf{e}_{\mathbf{2}}^* + e^{-i\theta}\mathbf{e_2}\mathbf{e}_{\mathbf{1}}^*$ in Theorem~\ref{LS1} to see that $H$ is Birkhoff--James orthogonal to every diagonal matrix.
\end{proof}

We now present several corollaries of Theorem~\ref{thm:optimal_ek_rephrase} that place restrictions on which Hermitian matrices are Birkhoff--James orthogonal to the diagonal matrices $\mathbf{e_j}\mathbf{e}_{\mathbf{j}}^*$ ($1 \leq j \leq n$). To start, we note that these conditions bound the inertia of such matrices quite strongly.

\begin{corollary}\label{cor:trace_min_inertia_bounds}
    Suppose $H \in M_n$ is Hermitian and Birkhoff--James orthogonal to $\mathbf{e_j}\mathbf{e}_{\mathbf{j}}^*$ for each $1 \leq j \leq n$, and let $\mu_+$ and $\mu_-$ denote how many eigenvalues it has that are positive and negative, respectively. Then
    \[
        \mu_- \leq n/2 \quad \text{and} \quad \mu_+ \leq n/2.
    \]
\end{corollary}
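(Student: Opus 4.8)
The plan is to reduce immediately to the diagonal characterization already established and then simply sum diagonal entries. By Theorem~\ref{thm:optimal_ek_rephrase}, the hypothesis that $H$ is Birkhoff--James orthogonal to $\mathbf{e_j}\mathbf{e}_{\mathbf{j}}^*$ for every $1 \leq j \leq n$ is equivalent to the condition $P^{+}_{j,j} \leq 1/2$ and $P^{-}_{j,j} \leq 1/2$ for all $1 \leq j \leq n$, so I would start from there.

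Next I would use the fact that $P^{+}$ is the orthogonal projection onto the positive eigenspace of $H$, so $\rank(P^{+}) = \mu_+$, and since the trace of an orthogonal projection equals its rank we have $\tr(P^{+}) = \mu_+$. Summing the diagonal bound then yields
\[
    \mu_+ = \tr(P^{+}) = \sum_{j=1}^{n} P^{+}_{j,j} \leq \sum_{j=1}^{n} \frac{1}{2} = \frac{n}{2},
\]
and the identical argument applied to $P^{-}$ gives $\mu_- \leq n/2$.

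There is essentially no obstacle here: the only ingredients are Theorem~\ref{thm:optimal_ek_rephrase} and the elementary identity that the trace of an orthogonal projection equals its rank. If one preferred not to route through Theorem~\ref{thm:optimal_ek_rephrase}, one could instead substitute $B = \mathbf{e_j}\mathbf{e}_{\mathbf{j}}^*$ directly into Theorem~\ref{thm:main_res} (noting $\tr(B) = 1$ and $\tr(BP^{\pm}) = P^{\pm}_{j,j}$) to recover the same diagonal bounds, and then finish exactly as above.
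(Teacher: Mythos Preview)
Your proof is correct and matches the paper's own argument essentially line for line: invoke Theorem~\ref{thm:optimal_ek_rephrase} to get $P^{\pm}_{j,j}\le 1/2$, then sum over $j$ and use $\tr(P^{\pm})=\mu_{\pm}$. The optional alternative you mention (going through Theorem~\ref{thm:main_res} directly) is exactly how Theorem~\ref{thm:optimal_ek_rephrase} is proved, so it is not a genuinely different route.
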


\begin{proof}
     Theorem~\ref{thm:optimal_ek_rephrase} tells us that the  positive and negative eigenprojections $P^+$ and $P^-$ satisfy $P^{+}_{j,j} \leq 1/2$ and $P^{-}_{j,j} \leq 1/2$ for all $1 \leq j \leq n$. Taking the trace, we obtain $\sum_{j=1}^n P^+_{j,j}=\mu_+\leq n/2$, and similarly for $\mu_-$.
\end{proof}

\begin{corollary}\label{cor:trace_min_zero_eigs}
    Suppose $H \in M_n$ is Hermitian and $n$ is odd. If $H$ is Birkhoff--James orthogonal to $\mathbf{e_j}\mathbf{e}_{\mathbf{j}}^*$ for each $1 \leq j \leq n$ then it must have $0$ as an eigenvalue.
\end{corollary}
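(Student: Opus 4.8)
The plan is to derive this immediately from Corollary~\ref{cor:trace_min_inertia_bounds} together with a parity observation. Let $\mu_+$, $\mu_-$, and $\mu_0$ denote the number of positive, negative, and zero eigenvalues of $H$, respectively, counted with multiplicity, so that $\mu_+ + \mu_- + \mu_0 = n$. The goal is to show $\mu_0 \geq 1$.

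First I would invoke Corollary~\ref{cor:trace_min_inertia_bounds}, which (since $H$ is Birkhoff--James orthogonal to each $\mathbf{e_j}\mathbf{e}_{\mathbf{j}}^*$) gives $\mu_+ \leq n/2$ and $\mu_- \leq n/2$. The key point is that $\mu_+$ and $\mu_-$ are integers while $n$ is odd, so $n/2$ is not an integer; hence these inequalities sharpen to $\mu_+ \leq (n-1)/2$ and $\mu_- \leq (n-1)/2$.

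Adding these two bounds yields $\mu_+ + \mu_- \leq n-1$, and therefore $\mu_0 = n - \mu_+ - \mu_- \geq 1$. This says exactly that $0$ is an eigenvalue of $H$, completing the argument. There is no substantive obstacle here: the only thing to be careful about is the rounding step, i.e., explicitly noting that an integer bounded above by $n/2$ with $n$ odd is in fact bounded above by $(n-1)/2$.
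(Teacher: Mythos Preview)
Your proof is correct and follows essentially the same approach as the paper: both invoke Corollary~\ref{cor:trace_min_inertia_bounds}, use the integrality of $\mu_\pm$ together with the oddness of $n$ to sharpen $\mu_\pm \le n/2$ to $\mu_\pm \le (n-1)/2$, and then conclude $\mu_0 \ge 1$. The only cosmetic difference is that the paper phrases it as a contradiction from $\mu_0 = 0$, while you argue directly.
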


\begin{proof}
     Let $\mu_+$, $\mu_-$, and $\mu_0$ denote how many eigenvalues $H$ has that are positive, negative, and $0$, respectively. Suppose (for the sake of establishing a contradiction) that $\mu_0 = 0$. Then $\mu_++\mu_0+\mu_-=n$ implies that $\mu_++\mu_-=n$. However, Corollary~\ref{cor:trace_min_inertia_bounds} implies, since $n$ is odd, that $\mu_+ \leq (n-1)/2$ and similarly for $\mu_-$, which yields $\mu_++\mu_-\leq n-1<n$, a contradiction.
\end{proof}

The above Corollary~\ref{cor:trace_min_zero_eigs} really is specific to matrices of odd size: in the even-dimensional case, matrices that are Birkhoff--James orthogonal to every diagonal matrix do not necessarily have $0$ as an eigenvalue. For example, Theorem~\ref{thm:b_j_orthog_2dim} tells us that the invertible matrix
\[
    A = \begin{bmatrix}
        0 & 1 \\ 1 & 0
    \end{bmatrix}
\]
is Birkhoff--James orthogonal to every diagonal matrix.

We now present a \emph{necessary and sufficient} condition for a Hermitian matrix to be Birkhoff--James orthogonal to every diagonal matrix (in contrast with Theorem~\ref{thm:optimal_ek_rephrase}, which provides a necessary condition). 

\begin{corollary}\label{cor:general_res}
    Suppose $H \in M_n$ is Hermitian and let $P^+$, $P^-$, $P_0$ denote the orthogonal projections onto the strictly positive eigenspaces, strictly negative eigenspaces and null space of $H$ respectively. Then $H$ is Birkhoff--James orthogonal to every diagonal matrix if and only if there exists $X\in M_n$ with $X$ Hermitian and $-I\le X\le I$ such that the matrix $M=P^+-P^-+P_0XP_0$ has all of its main diagonal entries equal to zero.
\end{corollary}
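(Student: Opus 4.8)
The plan is to fuse three ingredients: a version of Theorem~\ref{LS1} valid for the whole subspace of diagonal matrices, Lemma~\ref{secondlem}, and one short block-matrix computation.

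\emph{Step 1: a subspace version of Theorem~\ref{LS1}.} I would first prove that $H$ is Birkhoff--James orthogonal to every diagonal matrix if and only if there exists a Hermitian $M$ with $\|M\|\le 1$, $\tr(HM)=\|H\|_{\textup{tr}}$, and $M_{j,j}=0$ for all $j$. The ``if'' direction is immediate: for any diagonal $D$, $\|H+D\|_{\textup{tr}}\ge\tr\big((H+D)M\big)=\tr(HM)+\sum_j D_{j,j}M_{j,j}=\|H\|_{\textup{tr}}$. For ``only if'', assume $H\ne 0$ (the case $H=0$ is trivial) and set $K=\{M\in M_n:M=M^*,\ \|M\|\le 1,\ \tr(HM)=\|H\|_{\textup{tr}}\}$, a nonempty (it contains $P^+-P^-$), compact, convex set whose image $\phi(K)\subseteq\mathbb{R}^n$ under $M\mapsto(M_{1,1},\dots,M_{n,n})$ is compact and convex. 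If $0\notin\phi(K)$, a separating hyperplane yields a real diagonal matrix $C$ with $\tr(CM)>0$ for every $M\in K$; combining $\min_{M\in K}\tr(CM)>0$ with the fact that $\tr(HM)\le\|H\|_{\textup{tr}}-\varepsilon$ (for some $\varepsilon>0$) on the part of the operator-norm unit ball that is bounded away from $K$, and using the duality $\|A\|_{\textup{tr}}=\max\{\tr(AM):\|M\|\le 1\}$, one finds that $\|H-tC\|_{\textup{tr}}<\|H\|_{\textup{tr}}$ for all sufficiently small $t>0$, contradicting Birkhoff--James orthogonality of $H$ to the diagonal matrix $-tC$. (Equivalently, Step 1 is the standard Hahn--Banach fact that the distance from $H$ to a subspace is the supremum of the values that norm-one functionals annihilating the subspace take on $H$.)

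\emph{Step 2: rewriting the norming conditions.} By Lemma~\ref{secondlem}, ``$\|M\|\le 1$ and $\tr(HM)=\|H\|_{\textup{tr}}$'' is equivalent to $2P^+-I\preceq M\preceq I-2P^-$; since $I=P^++P_0+P^-$, this reads $-P_0\preceq M-(P^+-P^-)\preceq P_0$.

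\emph{Step 3: a block-matrix lemma.} I would prove that, for Hermitian $N$, one has $-P_0\preceq N\preceq P_0$ if and only if $N=P_0XP_0$ for some Hermitian $X$ with $-I\preceq X\preceq I$, and that one may take $X=N$. The ``if'' direction is $P_0(-I)P_0\preceq P_0XP_0\preceq P_0IP_0$. For ``only if'', each $\mathbf{v}$ in the range of $Q:=I-P_0=P^++P^-$ satisfies $\langle\mathbf{v},P_0\mathbf{v}\rangle=0$, hence is squeezed into $\langle\mathbf{v},N\mathbf{v}\rangle=0$; polarization then gives $QNQ=0$. Writing $N$ in $2\times 2$ block form with respect to $\mathrm{range}(P_0)\oplus\mathrm{range}(Q)$ and applying to $P_0-N\succeq 0$ the fact that a positive semidefinite block matrix one of whose diagonal blocks is zero must have its off-diagonal blocks zero as well, we conclude that $N=P_0NP_0$; then $N$ is supported on $\mathrm{range}(P_0)$ and $-P_0\preceq N\preceq P_0$ forces $-I\preceq N\preceq I$, so $X=N$ works.

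Combining Steps 1--3: $H$ is Birkhoff--James orthogonal to every diagonal matrix iff there is a Hermitian $M$ with $2P^+-I\preceq M\preceq I-2P^-$ and zero main diagonal, iff (taking $X=M-(P^+-P^-)$, and conversely setting $M=P^+-P^-+P_0XP_0$) there is a Hermitian $X$ with $-I\preceq X\preceq I$ for which $M=P^+-P^-+P_0XP_0$ has zero main diagonal, which is precisely the stated condition. The one genuinely nontrivial step is the ``only if'' half of Step 1: upgrading Theorem~\ref{LS1} from a single positive semidefinite target to the entire subspace of diagonal matrices, so as to obtain a \emph{single} norming matrix $M$ for $H$ that simultaneously has vanishing diagonal. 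Steps 2 and 3 are routine linear algebra.
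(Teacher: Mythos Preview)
Your proof is correct and follows essentially the same route as the paper: combine the trace-norm duality characterization (Theorem~\ref{LS1}) with Lemma~\ref{secondlem} to reduce the question to the existence of a Hermitian $M$ with $2P^+-I\preceq M\preceq I-2P^-$ and zero diagonal, then parametrize such $M$ as $P^+-P^-+P_0XP_0$.

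Where you differ is in thoroughness rather than strategy. The paper simply invokes Theorem~\ref{LS1} for ``every diagonal matrix'' and asserts the equivalence between the order-interval condition $2P^+-I\preceq M\preceq I-2P^-$ and the parametrization $M=P^+-P^-+P_0XP_0$ with $-I\preceq X\preceq I$, justifying only one direction of the latter. You correctly flag both of these as genuine (if standard) steps and supply arguments: a separating-hyperplane/subdifferential argument for the subspace upgrade of Theorem~\ref{LS1} (your Step~1), and the block-matrix lemma (your Step~3) showing that $-P_0\preceq M-(P^+-P^-)\preceq P_0$ forces $M-(P^+-P^-)$ to be supported on $\mathrm{range}(P_0)$. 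Your sketch of Step~1 is slightly informal---the cleanest phrasing is via the one-sided derivative of the convex function $t\mapsto\|H-tC\|_{\textup{tr}}$, whose subdifferential at $0$ is $\{-\tr(CM):M\in K\}$---but the idea is sound and the parenthetical Hahn--Banach remark is exactly right.
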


\begin{proof} We note that if $-I\le X\le I$, then $2P^+-I=P^+-P^--P_0\le P^+-P^-+P_0XP_0 \le P^+-P^-+P_0=I-2P^-$. Hence by Lemma~\ref{secondlem}, the existence of an $X\in M_n$ with $X$ Hermitian and $-I\le X\le I$ and $M=P^+-P^-+P_0XP_0$ has all of its main diagonal entries equal to zero is equivalent to the existence of a Hermitian $M$ having all of its main diagonal entries equal to zero and satisfying $\| M \| =1$ and $\tr(HM)=\| H\|_{\textup{tr}}$ which by Theorem \ref{LS1} is equivalent to $H$ being Birkhoff--James orthogonal to every diagonal matrix.
\end{proof}

We note that any choice of Hermitian $U$ in any polar decomposition of $H$ corresponds to $U=P^+-P^-+P_0XP_0$ where $X$ is chosen to be Hermitian and unitary; further is $H$ is invertible then $P_0=0$ and $P^+-P^-$ is the unitary in the unique polar decomposition of $H$. The following theorem is a consequence of these observations together with Corollary \ref{cor:general_res}.

\begin{theorem}\label{thm:suff_cond}
    Suppose $H \in M_n$ is a Hermitian matrix. If there is a polar decomposition $H=UP$ in which the unitary $U$ is Hermitian and has all of its diagonal entries equal to $0$, then $H$ is Birkhoff--James orthogonal to every diagonal matrix. If $H$ is invertible then the converse also holds.
\end{theorem}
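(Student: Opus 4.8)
The plan is to read both implications directly off Corollary~\ref{cor:general_res}, using as the only new ingredient the structural description of the Hermitian unitaries that can appear in a polar decomposition of $H$. Concretely, I would first establish the following standard fact: in any polar decomposition $H = UP$ of a Hermitian matrix $H$, the positive semidefinite factor is forced to be $P = \sqrt{H^2} = |H|$, the unitary factor acts as $P^+ - P^-$ on $\operatorname{range}(H) = \operatorname{range}(P^+ + P^-)$ (since there $U = H|H|^{-1}$ after decomposing $H = H_+ - H_-$ with $H_\pm \succeq 0$ supported on the positive/negative eigenspaces), and acts as an arbitrary unitary $W$ on $\ker H = \operatorname{range}(P_0)$. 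If we additionally insist that $U$ be Hermitian, then, since $U$ is block diagonal with respect to the decomposition $\operatorname{range}(H) \oplus \ker H$, Hermiticity forces $W$ to be Hermitian. Extending $W$ to a Hermitian unitary $X$ on all of $\mathbb{C}^n$ (so that $-I \preceq X \preceq I$ and $P_0 X P_0 = W$), we obtain $U = P^+ - P^- + P_0 X P_0$, which is exactly the form of the matrix $M$ in Corollary~\ref{cor:general_res}.

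For the forward implication, suppose $H = UP$ with $U$ Hermitian unitary having zero diagonal. By the above, $U = P^+ - P^- + P_0 X P_0$ for a Hermitian unitary $X$ with $-I \preceq X \preceq I$, so the matrix $M := U$ of Corollary~\ref{cor:general_res} has all of its main-diagonal entries equal to zero; Corollary~\ref{cor:general_res} then gives that $H$ is Birkhoff--James orthogonal to every diagonal matrix. For the converse, assume $H$ is invertible and Birkhoff--James orthogonal to every diagonal matrix. Then $P_0 = 0$, so the $X$ produced by Corollary~\ref{cor:general_res} contributes nothing and we get that $M = P^+ - P^-$ has zero diagonal. It remains to observe that $U := P^+ - P^-$ is Hermitian and unitary (indeed $(P^+ - P^-)^2 = P^+ + P^- = I$) and that $U|H| = H$ (writing $H = H_+ - H_-$ one has $U H_+ = H_+$ and $U H_- = -H_-$, hence $U(H_+ + H_-) = H_+ - H_- = H$); thus $H = U|H|$ is a polar decomposition of the desired form.

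I expect the only point requiring genuine care is the structural claim in the first paragraph — specifically, verifying in the singular case that the unitary part of a polar decomposition is completely free (apart from being unitary) on $\ker H$, and that imposing Hermiticity of $U$ then pins this free block down to being exactly $P_0 X P_0$ for $X$ Hermitian unitary. Once that is in hand, both directions are immediate substitutions into Corollary~\ref{cor:general_res} together with the elementary identities $(P^+-P^-)^2 = I$ and $(P^+-P^-)|H| = H$.
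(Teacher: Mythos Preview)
Your proposal is correct and is essentially the same argument the paper gives: the paper also derives Theorem~\ref{thm:suff_cond} directly from Corollary~\ref{cor:general_res} together with the structural observation that any Hermitian unitary in a polar decomposition of $H$ has the form $U=P^+-P^-+P_0XP_0$ with $X$ Hermitian unitary (and that $P_0=0$ in the invertible case). The only cosmetic difference is that you spell out the verifications $(P^+-P^-)^2=I$ and $(P^+-P^-)|H|=H$ explicitly, which the paper leaves implicit.
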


After some work, it can be seen that Theorem~\ref{thm:suff_cond} is essentially equivalent to a special case of \cite[Theorem 1.1]{grover2017orthogonality}.
It is perhaps worth recalling that if $H$ is invertible then its polar decomposition is unique, so the necessary and sufficient condition of Theorem~\ref{thm:suff_cond} is easy to check in this case. 

\subsection{When the Given Matrix is Positive Semidefinite}

In addition to the $2$-dimensional case that we saw in Theorem~\ref{thm:b_j_orthog_2dim}, there are some other extra conditions that we can add to $H$ that ensure that the (easy-to-check) properties of Theorem~\ref{thm:optimal_ek_rephrase} are equivalent to $H$ being Birkhoff--James orthogonal to every diagonal matrix. In particular, positive semidefiniteness of $H$ lets us increase the dimension a bit:

\begin{theorem}\label{thm:psd_birkhoff_orth_small_dim_low_rank}
    Suppose $H \in M_n^{+}$ is positive semidefinite and at least one of the following conditions hold: (a) $n \leq 4$, (b) $\rank(H) \in \{1,n/2,n\}$, or (c) the orthogonal projection onto $\mathrm{range}(H)$ has constant diagonal. Then $H$ is Birkhoff--James orthogonal to \emph{every} diagonal matrix if and only if it is Birkhoff--James orthogonal to $\mathbf{e_j}\mathbf{e}_{\mathbf{j}}^*$ for all $1 \leq j \leq n$.
\end{theorem}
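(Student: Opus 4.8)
The plan is to push everything through the earlier results until the statement becomes a concrete matrix problem, and then dispatch the three families of hypotheses one at a time. Since $\mathbf{e_j}\mathbf{e}_{\mathbf{j}}^*$ is itself diagonal, the ``only if'' direction is immediate, so all the work is in the converse. Assume therefore that $H$ is Birkhoff--James orthogonal to $\mathbf{e_j}\mathbf{e}_{\mathbf{j}}^*$ for every $j$. Because $H\succeq 0$ we have $P^-=0$; writing $P:=P^+$ for the orthogonal projection onto $\mathrm{range}(H)$ and $P_0=I-P$ for the projection onto $\mathrm{null}(H)$, Theorem~\ref{thm:optimal_ek_rephrase} says precisely that $P_{j,j}\le 1/2$ for all $j$, and Corollary~\ref{cor:general_res} (specialized to $P^-=0$, and noting that only $P_0XP_0$ matters) says that $H$ is Birkhoff--James orthogonal to every diagonal matrix if and only if there is a Hermitian matrix $X$ with $\mathrm{range}(X)\subseteq\mathrm{null}(H)$ and $\|X\|\le 1$ such that $P+X$ has zero diagonal. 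So the theorem reduces to: under the hypotheses, produce such an $X$.

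Next I would split on $r:=\rank(H)$. The case $r=0$ ($H=0$) is trivial, and the case $r=n$ is vacuous since then $P=I$ and $P_{j,j}=1>1/2$. For $1\le r\le n-1$ I claim that under hypothesis (a), (b), or (c), one is always in one of three situations: $r=1$; the projection $P$ has constant diagonal; or the situation is impossible. Indeed, (c) is exactly ``$P$ has constant diagonal''; (b) gives $r\in\{1,n/2,n\}$, and $r=n/2$ together with $P_{j,j}\le 1/2$ and $\sum_j P_{j,j}=\tr(P)=n/2$ forces $P_{j,j}=1/2$ for all $j$, i.e.\ constant diagonal; and under (a), if we are in neither the $r=1$ nor the constant-diagonal situation, a short counting check shows that the only ranks in $\{1,\dots,n-1\}$ not among $\{1,n/2\}$ for $n\le 4$ force $r=n-1$ with $n\in\{3,4\}$, whence $\mathrm{null}(H)$ is spanned by a single unit vector $\mathbf{u}$, $P=I-\mathbf{u}\mathbf{u}^*$, and $P_{j,j}\le 1/2$ forces $|u_j|^2\ge 1/2$ for every $j$, contradicting $\sum_j|u_j|^2=1$ since $n\ge 3$.

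It then remains to construct $X$ in the two non-vacuous situations. When $P$ has constant diagonal, $P_{j,j}=\tr(P)/n=r/n\le 1/2$, so $r\le n/2$; since also $r\le n-1$, the matrix $X:=-\tfrac{r}{n-r}P_0$ is well defined, is Hermitian, is supported on $\mathrm{null}(H)$, has operator norm $\tfrac{r}{n-r}\le 1$, and has diagonal entries $-\tfrac{r}{n-r}\cdot\tfrac{n-r}{n}=-r/n=-P_{j,j}$, as required. When $r=1$, after scaling we may write $H=\mathbf{v}\mathbf{v}^*$ with $\mathbf{v}$ a unit vector, so $P_{j,j}=|v_j|^2\le 1/2$; I would take $X=-\mathbf{w}\mathbf{w}^*$ for a unit vector $\mathbf{w}\perp\mathbf{v}$ with $|w_j|=|v_j|$ for all $j$, since then $X$ is supported on $\mathbf{v}^\perp=\mathrm{null}(H)$, has norm $1$, and has diagonal $-|v_j|^2$. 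Writing $w_j=v_je^{i\theta_j}$, the requirement $\mathbf{w}\perp\mathbf{v}$ becomes $\sum_j|v_j|^2e^{i\theta_j}=0$, which is solvable because $\{|v_j|^2\}_j$ are nonnegative reals summing to $1$ with maximum at most $1/2$ --- this is the classical ``polygon inequality'' stating that segments of those lengths close up into a (possibly degenerate) polygon.

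The main obstacle, and the part to handle most carefully, is the bookkeeping behind hypothesis (a): one must notice that for $n\le 4$ the set $\{0,1,n/2,n\}$ exhausts all ranks except $r=n-1$, and that corank $1$ is incompatible with $P_{j,j}\le 1/2$ once $n\ge 3$, so the genuinely new low-dimensional cases turn out to be vacuous. Everything else is either routine positive-semidefiniteness and trace accounting or the polygon inequality.
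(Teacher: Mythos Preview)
Your proposal is correct and follows essentially the same skeleton as the paper: reduce hypotheses (a) and (b) to the two base cases ``$\rank H=1$'' and ``$P$ has constant diagonal'', handle corank-$1$ (the only leftover rank for $n\le 4$) by showing it is incompatible with $P_{j,j}\le 1/2$, and then build the witnessing $M$ in each base case. Your rank-$1$ construction via the polygon inequality is exactly the content of the result the paper imports from \cite{johnston2018modified}.

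The one genuine difference worth noting is your treatment of case~(c). The paper takes $M=P/c-I$ and asserts (via Lemma~\ref{secondlem}) that $\|M\|\le 1$ and $\tr(HM)=\|H\|_{\textup{tr}}$, but in fact that choice has $\|M\|=1/c-1$ and $\tr(HM)=(1/c-1)\tr(H)$, so both conditions hold only when $c=1/2$ (i.e., only in the $\rank H=n/2$ situation the paper actually needs). Your choice $X=-\tfrac{r}{n-r}P_0$, equivalently $M=P-\tfrac{r}{n-r}(I-P)$, works for every constant diagonal value $c=r/n\le 1/2$, so your argument for (c) is cleaner and closes that small gap.
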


\begin{proof}
    Since the ``only if'' implication of this theorem is trivial, we only prove the ``if'' direction.
    
    We start with the rank-1 case of part (b). In this case, notice that in this case we can write $H = \mathbf{v}\mathbf{v}^*$. By the same argument used in the proof of \cite[Theorem~2]{johnston2018modified}, there exists a matrix $M = (\mathbf{v}\mathbf{v}^* - \mathbf{w}\mathbf{w}^*)/\|\mathbf{v}\|^2$ with all of its diagonal entries equal to $0$, $|w_j| = |v_j|$ for all $1 \leq j \leq n$, and $\mathbf{v}^*\mathbf{w} = 0$ (this matrix $M$ was called $Y$ in the proof of \cite[Theorem~2]{johnston2018modified}). For an arbitrary diagonal matrix $D$, we thus have $\tr(DM^*)=0$, $\|M\|\leq 1$, and $\tr(HM^*) = \mathbf{v}^*M\mathbf{v} = 1 = \Vert H\Vert_{\textup{tr}}$, so Theorem~\ref{LS1} tells us that $H$ is Birkhoff--James orthogonal to $D$.
    
    To see that part (c) of the theorem holds, let $c$ be the constant diagonal value (i.e., if $P$ is the orthogonal projection onto $\mathrm{range}(H)$ then $c = P_{j,j}$ for all $j$). If we let $M=P/c-I$, then by Lemma~\ref{secondlem} we have $\|M\|\le 1$ and $\tr(HM)=\Vert H\Vert_{\textup{tr}}$. Since all diagonal entries of $M$ are zero, $\tr(MD)=0$ for all diagonal matrices. It then follows from Theorem~\ref{LS1} that $H$ is Birkhoff--James orthogonal to every diagonal matrix.
    
    The rank $n/2$ case of part (b) then holds because, in this case we must have $\tr(P)=\mathrm{rank}(H) = n/2$. When $H$ is Birkhoff--James orthogonal to $\mathbf{e_j}\mathbf{e}_{\mathbf{j}}^*$ for all $1 \leq j \leq n$, we must have all diagonal entries of $P$ exactly equal to $1/2$ to also satisfy both the trace condition and condition (c) of Theorem~\ref{thm:optimal_ek_rephrase}. Part (c) of this theorem (which we already proved) then implies the result.
    
    The rank $n$ case of part (b) follows immediately from Corollary~\ref{cor:trace_min_even_dim}.
    
    Finally, to see that part (a) of the theorem holds, notice that if $n = 2$ then Theorem~\ref{thm:b_j_orthog_2dim} gives the result. If $n = 3$ then Corollary~\ref{cor:trace_min_inertia_bounds} tells us that $\mathrm{rank}(H) = \mu_+ \leq n/2 = 3/2$, so $\mathrm{rank}(H) = 1$, and we already proved the rank-$1$ case. If $n = 4$ then Corollary~\ref{cor:trace_min_inertia_bounds} tells us that $\mathrm{rank}(H) = \mu_+ \leq n/2 = 2$. Regardless of whether $\mathrm{rank}(H) = 1$ or $\rank(H) = 2 = n/2$, case(b) of this theorem shows that we are already done.
\end{proof}

We note that Theorem~\ref{thm:psd_birkhoff_orth_small_dim_low_rank} is in a sense tight: if all three of the conditions (a)--(c) fail, then the conclusion of the theorem needs not hold, as demonstrated by the next example that has $n = 5$ and $1 < \rank(H) = 2 < n/2$.

\begin{example}\label{exam:n5_rank2}
    Let $\alpha = \sqrt{(\sqrt{10}-1)/3}$ and let $P \in M_5^{+}$ be the orthogonal projection onto the subspace
    \[
        \mathrm{span}\big( (1,1,1,0,0), (0,0,1,\alpha,1/\alpha) \big).
    \]
    Then the following claims are all straightforward to verify:
    
    \begin{itemize}
        \item[1)] $P$ is positive semidefinite with rank $2$;
        
        \item[2)] $P_{j,j} \leq 1/2$ for all $j$ (with $P_{5,5} = 1/2$ exactly), so Theorem~\ref{thm:optimal_ek_rephrase} tells us that it is Birkhoff--James orthogonal to every positive semidefinite diagonal matrix. However;
        
        \item[3)] If $D = \mathrm{diag}(0, 0, 3, -1, 3)$ then $\|P - D/40\|_{\textup{tr}} \approx 1.99441235 < 2 = \|P\|_{\textup{tr}}$, so $P$ is not Birkhoff--James orthogonal to $D$.
    \end{itemize}
\end{example}

\section{Applications to Quantum Resource Theories}\label{sec:quantum_resource_theories}

In quantum information theory, a \emph{mixed quantum state} or \emph{density matrix} is a positive semidefinite trace-one matrix, typically denoted by $\rho \in M_n$ or $\sigma \in M_n$. A \emph{pure quantum state} is a unit vector in $\mathbb{C}^n$, which we denote using ``bra-ket'' notation: $\ket{v} \in \mathbb{C}^n$ is a unit column vector, while $\bra{v} \defeq \ket{v}^*$ is the corresponding (dual) row vector. From now on, whenever we use lowercase Greek letters like $\rho$, we are assuming that it is a quantum state normalized to have $\tr(\rho) = 1$, and similarly if we use $\ket{v}$ then we are assuming it is a \emph{unit} vector. For un-normalized matrices and vectors, we denote them like $A \in M_n^+$ and $\mathbf{v} \in \mathbb{C}^n$, just like in the earlier sections of this paper.

A quantum resource theory \cite{CG18} consists of two things: (1) a closed convex cone of positive semidefinite matrices $C \subseteq M_n^{+}$, and (2) a set of linear maps that send $C$ back to itself (for our purposes, only (1) is relevant). The quantum states $\rho \in C$ (i.e., the members of $C$ with trace $1$) are called ``free states'', and are thought of as the states that are ``useless'' for some given task in quantum information theory or quantum computation.

The most well-known resource theory is that of \emph{entanglement}, where $C = C_{\textup{ent}}$ is the set of \emph{separable} quantum states \cite{GT09,HHH09,Wer89}:
\begin{align*}
    C_{\textup{ent}} & \defeq \left\{ \sum_j X_j \otimes Y_j : X_j \in M_m^{+}, Y_j \in M_n^{+} \ \text{for all} \ j\right\} \subseteq (M_m \otimes M_n)^{+}.
\end{align*}
However, numerous other resource theories are studied as well \cite{Reg18}. For example, another widely-used resource theory is that of \emph{coherence} \cite{Abe06,BCP14,Glau63,Su63}, in which the set of free states $C = C_{\textup{coh}}$ consists of exactly those that are diagonal (in the standard basis):
\begin{align*}
    C_{\textup{coh}} & \defeq \left\{ \sum_j x_j \mathbf{e_j}\mathbf{e}_{\mathbf{j}}^* : 0 \leq x_j \in \mathbb{R} \ \text{for all} \ j \right\} \subseteq M_n^{+}.
\end{align*}
In this context, the members of $C_{\textup{coh}}$ are called \emph{incoherent}.

Given a closed convex cone $C$, we can define the following \emph{modified trace distance of $C$}, which simply measures the distance from a given quantum state to $C$:
\begin{align}\label{eq:dist_defn}
    D_{C}(\rho) & \defeq \min\big\{ \|\rho - X\|_{\textup{tr}} : X \in C \big\}.
\end{align}
This quantity is a ``proper'' (i.e., physically relevant) measure of the resource specified by $C$ in the sense of \cite{Reg18}. When $C = C_{\textup{coh}}$, it is the \emph{modified trace distance of coherence} \cite{yu2016alternative}, and when $C = C_{\textup{ent}}$, it is the \emph{modified trace distance of entanglement} \cite{RFWG19}. For notational convenience, we omit the central level of subscripting when denoting this distance. For example, we denote the modified trace distance of coherence simply by $D_{\textup{coh}}$, rather than $D_{C_\textup{coh}}$.

As a bit of a historical note, we comment that an analogous (non-modified) \emph{trace distance of $C$} was originally defined in the $C = C_{\textup{coh}}$ case in \cite{rana2016trace} via the minimization
\[
    \min\big\{ \|\rho - X\|_{\textup{tr}} : X \in C, \tr(X) = 1 \big\}.
\]
However, this quantity was subsequently shown to \emph{not} be a ``proper'' measure of coherence (i.e., it lacked certain physically desirable properties that a quantification of a quantum resource should satisfy). The ``modified'' trace distance of $C$ was then introduced to fix this problem \cite{yu2016alternative}.

By applying our theorems concerning Birkhoff--James orthogonality (in particular, Theorem~\ref{thm:main_res}) in this setting, we get the following characterization of which mixed quantum states have $D_C(\rho) = 1$. These states are of interest because they are exactly those that are ``most resourceful'' or ``most useful'': it is clear from the definition of Equation~\eqref{eq:dist_defn} that it is always the case that $D_C(\rho) \leq 1$.

\begin{theorem}\label{thm:mod_tr_dist_resource}
    Suppose $C \subseteq M_n^{+}$ is a closed convex cone, $\rho \in M_n^{+}$ is a mixed quantum state, and $P$ is the orthogonal projection onto $\mathrm{range}(\rho)$. The following are equivalent:
    \begin{enumerate}
        \item[(a)] $D_C(\rho) = 1$.
        
        \item[(b)] $\rho$ is Birkhoff--James orthogonal in the trace norm to every member of $C$.
        
        \item[(c)] $\tr(P\sigma) \leq 1/2$ for every mixed quantum state $\sigma \in C$.
    \end{enumerate}
\end{theorem}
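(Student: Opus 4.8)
The plan is to prove the cycle of equivalences (a) $\Leftrightarrow$ (b) $\Leftrightarrow$ (c), leaning on the main result Theorem~\ref{thm:main_res} and on the fact that $\rho$ is positive semidefinite (so its ``strictly positive eigenspace'' is exactly $\mathrm{range}(\rho)$ and it has no strictly negative eigenspace). Concretely, for $H = \rho$ we have $P^+ = P$, the orthogonal projection onto $\mathrm{range}(\rho)$, and $P^- = 0$.

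For (b) $\Leftrightarrow$ (c): Since $P^- = 0$, the condition $\tr(BP^-) \le \tfrac12\tr(B)$ in Theorem~\ref{thm:main_res} is automatic for every $B \in M_n^+$, so $\rho$ is Birkhoff--James orthogonal to a given $B \in C$ if and only if $\tr(BP) \le \tfrac12\tr(B)$. Thus $\rho$ is Birkhoff--James orthogonal to \emph{every} member of $C$ iff $\tr(BP) \le \tfrac12\tr(B)$ for all $B \in C$. By homogeneity this is the same as requiring $\tr(P\sigma) \le \tfrac12$ for every $\sigma \in C$ with $\tr(\sigma) = 1$, i.e.\ every mixed state $\sigma \in C$ (the case $B = 0$ being trivial). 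This gives (b) $\Leftrightarrow$ (c).

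For (a) $\Leftrightarrow$ (b): First, $\|\rho\|_{\textup{tr}} = \tr(\rho) = 1$ because $\rho \succeq 0$. If (b) holds, then for every $X \in C$ the Birkhoff--James inequality with $\lambda = -1$ gives $\|\rho - X\|_{\textup{tr}} \ge \|\rho\|_{\textup{tr}} = 1$, and since $D_C(\rho) \le 1$ always (taking $X = 0 \in C$, as $C$ is a cone), we get $D_C(\rho) = 1$, i.e.\ (a). Conversely, suppose $D_C(\rho) = 1$. Let $B \in C$ and $\lambda \in \mathbb{R}$ be arbitrary; we want $\|\rho + \lambda B\|_{\textup{tr}} \ge 1$. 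If $\lambda \le 0$, then $-\lambda B \in C$ (cone), so $\|\rho + \lambda B\|_{\textup{tr}} = \|\rho - (-\lambda B)\|_{\textup{tr}} \ge D_C(\rho) = 1$. If $\lambda > 0$, then since $\rho, B \succeq 0$ we have $\rho + \lambda B \succeq 0$, hence $\|\rho + \lambda B\|_{\textup{tr}} = \tr(\rho + \lambda B) = 1 + \lambda\tr(B) \ge 1$. In both cases the inequality holds, so $\rho$ is Birkhoff--James orthogonal to $B$; as $B \in C$ was arbitrary, (b) follows.

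The only mildly delicate point is handling the sign of $\lambda$ in the (a) $\Rightarrow$ (b) direction — the $\lambda \le 0$ case uses that $C$ is a cone while the $\lambda > 0$ case uses positivity to evaluate the trace norm directly — but neither is really an obstacle; everything reduces to the observation $P^- = 0$ together with the cone structure of $C$. No step requires any substantial new computation beyond invoking Theorem~\ref{thm:main_res}.
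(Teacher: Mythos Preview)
Your proof is correct and follows essentially the same approach as the paper: both arguments establish (b) $\Leftrightarrow$ (c) by specializing Theorem~\ref{thm:main_res} to the case $P^+ = P$, $P^- = 0$ and rescaling, and both handle (a) $\Leftrightarrow$ (b) by using the cone property of $C$ to reduce negative $\lambda$ to the definition of $D_C$ and positive semidefiniteness of $\rho + \lambda B$ to dispose of $\lambda > 0$. Your exposition is slightly more explicit about the case split on $\operatorname{sgn}(\lambda)$ and about why $P^- = 0$, but the logic is the same.
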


\begin{proof}
    To see that (a) is equivalent to (b), note (by definition) that $\rho$ is Birkhoff--James orthogonal in the trace norm to every member of $C$ if and only if
    \[
        1 = \|\rho\|_{\textup{tr}} \leq \|\rho + \lambda X\|_{\textup{tr}}
    \]
    for all $\lambda \in \mathbb{R}$ and $X \in C$. However, we can safely ignore positive values of $\lambda$ since adding the positive semidefinite matrix $\lambda X$ to the positive semidefinite matrix $\rho$ can only increase its eigenvalues (and thus its trace norm). We can also safely ignore all negative values of $\lambda$ other than $\lambda = -1$, since $C$ is a cone so we can absorb $|\lambda|$ into $X$.
    
    It follows that (b) is equivalent to $\|\rho - X\|_{\textup{tr}} \geq 1$ for all $X \in C$, which (by Equation~\eqref{eq:dist_defn}) is equivalent to $D_C(\rho) \geq 1$. Since $D_C(\rho) \leq 1$ always holds, this establishes the claim that (a) and (b) are equivalent.
    
    To see that (b) and (c) are equivalent, we recall from Theorem~\ref{thm:main_res} that $\rho$ is Birkhoff--James orthogonal to every $X \in C$ if and only if $\tr(XP) \leq \tr(X)/2$. Since this inequality is scale-invariant, without loss of generality we may set $\sigma = X/\tr(X)$ to see that (b) holds if and only if $\tr(P\sigma) \leq 1/2$ for every mixed quantum state $\sigma \in C$.
\end{proof}

Since $C$ is a convex cone, condition~(c) of Theorem~\ref{thm:mod_tr_dist_resource} can be weakened slightly to say that $\tr(P\sigma) \leq 1/2$ for every mixed quantum state $\sigma$ that lies on an extreme ray of $C$. For all of the cones $C$ that we will explore (i.e., the cones defining $k$-coherence or $k$-entanglement), these extreme quantum states are exactly the pure states (i.e., the rank-1 matrices). So for our purposes, the conditions of Theorem~\ref{thm:mod_tr_dist_resource} are furthermore equivalent to $\bra{v}P\ket{v} \leq 1/2$ for every pure quantum state $\ketbra{v}{v} \in C$.

\subsection{The Modified Trace Distance of Coherence}\label{sec:mod_tr_coh}

The modified trace distance of coherence was shown to be of limited use for pure states, in that it reaches its maximum value (i.e., $D_{\textup{coh}}(\ketbra{v}{v}) = 1$) on nearly all pure states (and therefore does not differentiate between these states) in \cite{johnston2018modified}. We now provide the natural generalization of this result to the case of arbitrary-rank (i.e., not necessarily pure) quantum states:

\begin{theorem}\label{thm:mod_tr_norm_range}
    Suppose $\rho \in M_n^{+}$ is a mixed quantum state and $P$ is the orthogonal projection onto $\mathrm{range}(\rho)$. Then $D_{\textup{coh}}(\rho) = 1$ if and only if $P_{j,j} \leq 1/2$ for all $1 \leq j \leq n$.
\end{theorem}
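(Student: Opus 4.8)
The plan is to deduce this theorem almost immediately by chaining together the two general results already available, namely Theorem~\ref{thm:mod_tr_dist_resource} applied to the coherence cone and Theorem~\ref{thm:optimal_ek_rephrase}. First I would instantiate Theorem~\ref{thm:mod_tr_dist_resource} with $C = C_{\textup{coh}}$: its equivalence of (a) and (b) tells us that $D_{\textup{coh}}(\rho) = 1$ if and only if $\rho$ is Birkhoff--James orthogonal in the trace norm to every member of $C_{\textup{coh}}$, i.e.\ to every positive semidefinite diagonal matrix. This is exactly condition~(a) of Theorem~\ref{thm:optimal_ek_rephrase} with $H = \rho$.

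Next I would apply Theorem~\ref{thm:optimal_ek_rephrase} itself, using its equivalence of (a) and (c). The only thing to observe is how the projections $P^+$ and $P^-$ of $H = \rho$ specialize when $\rho$ is positive semidefinite: since $\rho$ has no strictly negative eigenvalues, $P^- = 0$, and since the strictly positive eigenspace of $\rho$ is precisely $\mathrm{range}(\rho)$, we have $P^+ = P$. Consequently condition~(c) of Theorem~\ref{thm:optimal_ek_rephrase}, namely $P^{+}_{j,j} \leq 1/2$ and $P^{-}_{j,j} \leq 1/2$ for all $j$, collapses to $P_{j,j} \leq 1/2$ for all $j$ (the inequality on $P^-$ being the trivial $0 \le 1/2$). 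Combining the two equivalences gives the claimed statement.

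There is essentially no technical obstacle here; the ``hard work'' was done in establishing Theorem~\ref{thm:main_res} and Theorem~\ref{thm:mod_tr_dist_resource}. The only point requiring a word of care is the identification $P^+ = P$, $P^- = 0$ for a positive semidefinite matrix, which should be stated explicitly so the reader sees why condition~(c) of Theorem~\ref{thm:optimal_ek_rephrase} simplifies, and the remark immediately following Theorem~\ref{thm:mod_tr_dist_resource} (that one may restrict attention to extreme rays, i.e.\ the matrices $\mathbf{e_j}\mathbf{e}_{\mathbf{j}}^*$) can optionally be invoked to make the link to Theorem~\ref{thm:optimal_ek_rephrase}(b) even more transparent, though it is not strictly needed.
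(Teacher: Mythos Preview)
Your proposal is correct and essentially matches the paper's approach: both deduce the result immediately from Theorem~\ref{thm:mod_tr_dist_resource}. The only cosmetic difference is that the paper invokes the equivalence (a)$\Leftrightarrow$(c) of Theorem~\ref{thm:mod_tr_dist_resource} directly (together with the remark on extreme rays, so that the condition $\tr(P\sigma)\le 1/2$ reduces to $P_{j,j}\le 1/2$), whereas you route through (a)$\Leftrightarrow$(b) of Theorem~\ref{thm:mod_tr_dist_resource} and then Theorem~\ref{thm:optimal_ek_rephrase}; the content is the same.
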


\begin{proof}
    This follows immediately from the equivalence of conditions (a) and (c) in Theorem~\ref{thm:mod_tr_dist_resource}: the only pure quantum states $\ketbra{v}{v} \in C$ come from choosing $\ket{v} = \mathbf{e_j}$ for some $1 \leq j \leq n$, so $D_{\textup{coh}}(\rho) = 1$ if and only if $P_{j,j} = \mathbf{e}_{\mathbf{j}}^*P\mathbf{e_j} \leq 1/2$ for all $1 \leq j \leq n$.
\end{proof}

Indeed, the above theorem is a direct generalization of \cite[Theorem~2]{johnston2018modified} from the rank-1 case to the general case (in the rank-1 case $\rho = \ketbra{v}{v}$, notice that $P_{j,j} = |v_j|^2$).

In addition to the condition of Theorem~\ref{thm:mod_tr_norm_range} being easy to check, it has the interesting consequence of bounding the possible rank of any state with $D_{\textup{coh}}(\rho) = 1$. Not only are low-rank states usually maximally coherent (in the sense of $D_{\textup{coh}}$), but high-rank states are \emph{never} maximally coherent:

\begin{corollary}\label{cor:mod_tr_norm_small_rank}
    Suppose $\rho \in M_n^{+}$ is a mixed quantum state with $D_{\textup{coh}}(\rho) = 1$. Then $\mathrm{rank}(\rho) \leq n/2$.
\end{corollary}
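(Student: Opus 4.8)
The plan is to reduce immediately to the characterization already in hand, namely Theorem~\ref{thm:mod_tr_norm_range}, and then run a one-line trace argument analogous to the proof of Corollary~\ref{cor:trace_min_inertia_bounds}. First I would invoke Theorem~\ref{thm:mod_tr_norm_range}: since $D_{\textup{coh}}(\rho) = 1$, the orthogonal projection $P$ onto $\mathrm{range}(\rho)$ satisfies $P_{j,j} \le 1/2$ for all $1 \le j \le n$. The key observation is then that $\mathrm{rank}(\rho) = \mathrm{rank}(P)$, because $P$ projects precisely onto the range of $\rho$, and that for an orthogonal projection $\mathrm{rank}(P) = \mathrm{tr}(P)$.

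Next I would simply sum the diagonal bounds: $\mathrm{rank}(\rho) = \mathrm{tr}(P) = \sum_{j=1}^{n} P_{j,j} \le \sum_{j=1}^{n} \tfrac{1}{2} = n/2$, which is the desired conclusion. No case analysis or inequality beyond monotonicity of summation is needed.

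There is essentially no serious obstacle here: the entire content has already been extracted into Theorem~\ref{thm:mod_tr_norm_range}, and what remains is the elementary identity $\mathrm{rank}(P) = \mathrm{tr}(P)$ for an orthogonal projection together with term-by-term addition of the bounds $P_{j,j}\le 1/2$. If anything, the only point worth stating explicitly is why $\mathrm{rank}(\rho)$ equals $\mathrm{tr}(P)$ rather than something larger or smaller, and that is immediate from the definition of $P$ as the projection onto $\mathrm{range}(\rho)$. (One could also remark, as a sanity check, that this is consistent with Corollary~\ref{cor:mod_tr_norm_small_rank}'s coherence-theoretic interpretation: only states whose support occupies at most half the dimensions can be maximally coherent in the $D_{\textup{coh}}$ sense.)
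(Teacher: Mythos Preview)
Your proposal is correct and matches the paper's own proof essentially line for line: invoke Theorem~\ref{thm:mod_tr_norm_range} to get $P_{j,j}\le 1/2$, sum to obtain $\tr(P)\le n/2$, and then use $\mathrm{rank}(\rho)=\mathrm{rank}(P)=\tr(P)$.
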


\begin{proof}
    If $D_{\textup{coh}}(\rho) = 1$ then Theorem~\ref{thm:mod_tr_norm_range} tells us that the projection $P$ onto $\mathrm{range}(\rho)$ satisfies $P_{j,j} \leq 1/2$ for all $j$, so $\tr(P) \leq n/2$. Since $\tr(P) = \mathrm{rank}(P) = \mathrm{rank}(\rho)$, the result follows.
\end{proof}

In particular, the above corollary tells us that all quantum states with maximal modified trace distance of coherence in the $n = 2$ and $n = 3$ cases are necessarily pure (i.e., rank-$1$). This is no longer true when $n \geq 4$, however.

\subsection{k-Coherence}\label{sec:k_coh}

As a generalization of coherence, for an integer $1 \leq k \leq n$, the resource theory of \emph{$k$-coherence} \cite{RBC18} makes use of the following set $C = C_{\textup{k-coh}}$ as its free states:
\begin{align*}
    C_{\textup{k-coh}} & \defeq \left\{ \sum_j \mathbf{v_j}\mathbf{v}_{\mathbf{j}}^* : \text{each} \ \mathbf{v_j} \in \mathbb{C}^n \ \text{has at most $k$ non-zero entries} \right\} \subseteq M_n^{+}.
\end{align*}

In the $k = 1$ case, this reduces down to exactly the set of incoherent states (i.e., $C_{\textup{1-coh}} = C_{\textup{coh}}$), whereas the $k = n$ case simply reduces to the set of \emph{all} mixed states (i.e., $C_{\textup{n-coh}} = M_n^{+}$). For intermediate values of $k$, these sets satisfy the obvious chain of inclusions
\[
    C_{\textup{coh}} = C_{\textup{1-coh}} \subsetneq C_{\textup{2-coh}} \subsetneq C_{\textup{3-coh}} \subsetneq \cdots \subsetneq C_{\textup{(n-1)-coh}} \subsetneq C_{\textup{n-coh}} = M_n^{+}.
\]

By using the fact that the pure states in $C_{\textup{k-coh}}$ are the unit vectors with at most $k$ non-zero entries, we immediately get a characterization of the mixed states that maximize the modified trace distance of $k$-coherence:

\begin{theorem}\label{thm:k_coh_equiv}
    Suppose $\rho \in M_n^{+}$ is a mixed quantum state and $P$ is the orthogonal projection onto $\mathrm{range}(\rho)$. Then $D_{\textup{k-coh}}(\rho) = 1$ if and only if the operator norm of every $k \times k$ principal submatrix of $P$ is at most $1/2$.
\end{theorem}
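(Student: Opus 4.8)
The plan is to apply Theorem~\ref{thm:mod_tr_dist_resource}, whose equivalence of (a) and (c) reduces the problem to understanding when $\tr(P\sigma) \le 1/2$ for every mixed state $\sigma \in C_{\textup{k-coh}}$. Since $C_{\textup{k-coh}}$ is a convex cone whose extreme rays are spanned by rank-one matrices $\ketbra{v}{v}$ with $\ket{v}$ having at most $k$ nonzero entries, and since (as noted after Theorem~\ref{thm:mod_tr_dist_resource}) it suffices to check the inequality on these extreme pure states, condition (c) becomes: $\bra{v}P\ket{v} \le 1/2$ for every unit vector $\ket{v}$ supported on a set $S \subseteq \{1,\dots,n\}$ of size at most $k$.

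The key step is then to translate the family of inequalities $\{\bra{v}P\ket{v} \le 1/2 : \ket{v} \text{ supported on } S\}$, ranging over a fixed support set $S$, into a single statement about a principal submatrix. First I would fix $S$ with $|S| = k$ (it suffices to consider supports of size exactly $k$, since any smaller support is contained in one of size $k$, and enlarging the support only enlarges the class of test vectors). For $\ket{v}$ supported on $S$, write $\ket{v} = \sum_{i \in S} v_i \mathbf{e_i}$; then $\bra{v}P\ket{v} = \sum_{i,j \in S} \overline{v_i} v_j P_{i,j} = \bra{w} P_S \ket{w}$, where $P_S$ is the $k \times k$ principal submatrix of $P$ indexed by $S$ and $\ket{w} \in \mathbb{C}^k$ collects the entries $(v_i)_{i \in S}$, a unit vector. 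Hence $\max\{\bra{v}P\ket{v} : \ket{v} \text{ unit, supported on } S\} = \max\{\bra{w}P_S\ket{w} : \|\ket{w}\| = 1\} = \lambda_{\max}(P_S) = \|P_S\|$, the last equality because $P_S$ is positive semidefinite (being a principal submatrix of the positive semidefinite matrix $P$) and hence its operator norm equals its largest eigenvalue. Therefore the condition ``$\bra{v}P\ket{v} \le 1/2$ for all such $\ket{v}$'' is exactly ``$\|P_S\| \le 1/2$'', and quantifying over all $S$ of size $k$ (equivalently, of size at most $k$) gives precisely the claim that every $k \times k$ principal submatrix of $P$ has operator norm at most $1/2$.

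I do not anticipate a serious obstacle here: the argument is a routine combination of Theorem~\ref{thm:mod_tr_dist_resource} with the standard variational characterization of the largest eigenvalue of a Hermitian matrix restricted to a coordinate subspace. The only point requiring a sentence of care is the reduction from ``support of size at most $k$'' to ``support of size exactly $k$'' (monotonicity of the test-vector class under enlarging $S$), and the observation that a principal submatrix of a PSD matrix is PSD so that operator norm coincides with maximal eigenvalue. If one wanted to avoid invoking the ``extreme pure states'' remark following Theorem~\ref{thm:mod_tr_dist_resource}, one could instead argue directly from condition (c) that $\tr(P\sigma) \le 1/2$ for all $\sigma \in C_{\textup{k-coh}}$ with $\tr(\sigma)=1$ iff it holds on the rank-one generators, using that both sides are linear in $\sigma$ and that the generators span the cone.
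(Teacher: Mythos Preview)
Your proposal is correct and follows essentially the same approach as the paper: invoke the (a)$\Leftrightarrow$(c) equivalence of Theorem~\ref{thm:mod_tr_dist_resource}, reduce to pure states with at most $k$ nonzero entries, and identify the supremum of $\bra{v}P\ket{v}$ over unit vectors supported on a fixed $S$ with the operator norm of the principal submatrix $P_S$. Your write-up is simply more detailed than the paper's, spelling out the reduction from supports of size $\le k$ to size exactly $k$ and the fact that $P_S$ is positive semidefinite so that its operator norm equals its maximal eigenvalue.
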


\begin{proof}
    This follows immediately from the equivalence of conditions (a) and (c) in Theorem~\ref{thm:mod_tr_dist_resource}: the pure quantum states $\ketbra{v}{v} \in C$ come from choosing the unit vectors $\ket{v} \in \mathbb{C}^n$ with at most $k$ non-zero entries, so $D_{\textup{coh}}(\rho) = 1$ if and only if $\bra{v}P\ket{v} \leq 1/2$ for all such vectors. By focusing on all the $\ket{v}$ that have their non-zero entries in particular locations, we see that this is equivalent to the principal submatrix of $P$ corresponding to those locations having operator norm at most $1/2$.
\end{proof}

Indeed, the above theorem reduces to Theorem~\ref{thm:mod_tr_norm_range} in the special case when $k = 1$, since the $1 \times 1$ principal submatrices of $P$ are exactly its diagonal entries. 

The following concept will be useful in what follows.  Let $\|A\|_{(k)}$ denote the maximal operator norm of a $k \times k$ principal submatrix of $A$ (for example, if $k = 1$ then $\|A\|_{(k)}$ is the maximal diagonal entry of $A$). For $k>1$, this is a norm (but that fact won't be important for the coming argument).  We will present a useful characterization of $\|P\|_{(k)}$ when $P$ is a projection.  First we need the following lemma.

\begin{lemma} If $P,Q\in M_n^{+}$ are orthogonal projections then $\| PQP  \|=\|QPQ \|$.   \end{lemma}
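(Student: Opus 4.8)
The plan is to show that $PQP$ and $QPQ$ have the same nonzero eigenvalues, from which $\|PQP\| = \|QPQ\|$ follows immediately since both matrices are positive semidefinite (so their operator norms equal their largest eigenvalues). The key observation is that for any matrices $A$ and $B$, the products $AB$ and $BA$ have the same nonzero eigenvalues (with the same multiplicities). I would apply this with $A = PQ$ and $B = QP = (PQ)^*$, or more symmetrically, note that $PQP = (PQ)(QP)$ since $Q$ is a projection ($Q^2 = Q$), and likewise $QPQ = (QP)(PQ)$ since $P^2 = P$. Thus $PQP = (PQ)(PQ)^*$ and $QPQ = (PQ)^*(PQ)$, and these share the same nonzero eigenvalues.

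First I would record that $P^2 = P$ and $Q^2 = Q$ (they are orthogonal projections, hence idempotent and Hermitian). Then I would write $PQP = PQ \cdot QP = (PQ)(PQ)^*$, using $Q = Q^2$ and $(PQ)^* = Q^*P^* = QP$. Symmetrically, $QPQ = QP \cdot PQ = (PQ)^*(PQ)$, using $P = P^2$. Second, I would invoke the standard fact that for any $X \in M_n$, the matrices $XX^*$ and $X^*X$ have the same eigenvalues (they are both positive semidefinite with the same characteristic polynomial up to a factor of a power of $t$; indeed their nonzero eigenvalues coincide with the squares of the singular values of $X$). Applying this with $X = PQ$ gives that $PQP = XX^*$ and $QPQ = X^*X$ have the same largest eigenvalue. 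Finally, since $\|A\| = \lambda_{\max}(A)$ for positive semidefinite $A$, I conclude $\|PQP\| = \|QPQ\|$.

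There is essentially no obstacle here — the only thing to be slightly careful about is the degenerate cases (e.g.\ $PQ = 0$, where both sides are $0$), which are handled automatically by the "same nonzero eigenvalues" statement, and the observation that $\|PQP\|$ and $\|QPQ\|$ really are attained at the top eigenvalue, which holds because $PQP \succeq 0$ and $QPQ \succeq 0$. One could alternatively give a one-line argument: $\|PQP\| = \|PQ\|^2 = \|(PQ)^*\|^2 = \|QP\|^2 = \|QPQ\|$, using $\|X\|^2 = \|XX^*\| = \|X^*X\|$ for the operator norm; I would likely present this shorter version.
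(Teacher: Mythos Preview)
Your proof is correct and follows essentially the same approach as the paper: both observe that $PQP=(PQ)(QP)$ and $QPQ=(QP)(PQ)$ via idempotence, invoke the standard fact that $AB$ and $BA$ share the same (nonzero) eigenvalues, and conclude by noting that the operator norm of a positive semidefinite matrix is its largest eigenvalue. Your alternative one-line argument via $\|PQP\|=\|PQ\|^2=\|QP\|^2=\|QPQ\|$ is also valid and slightly slicker.
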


\begin{proof}  Since $PQP$ and $QPQ$ are both positive semidefinite matrices, their operator norms are equal to the their largest eigenvalues.  We note that it is well-known that if $A$ and $B$ are two square matrices, then $AB$ and $BA$ have the same eigenvalues.  Our result now follows from the fact that $PQP=PQ^2P=(PQ)(QP)$ and $QPQ=QP^2Q=(QP)(PQ)$.
\end{proof}

While this fact will not be used in the paper, we note that $\| PQP  \|$ has geometric significance being the square of the cosine of the first canonical angle between the range subspaces of $P$ and $Q$ (see \cite{li2015some,kribs2019isoclinic} for applications of canonical angles to quantum information).

\begin{lemma}\label{twoproj}  Let $P\in M_n$ be a projection and let $1\le k\le n$.  Then we have the equality
    
 \[
        \|P\|_{(k)} = \max \left\{ \sum_{j=1}^k\big|v_j^\downarrow\big|^2 : \ket{v} \in \mathrm{range}(P) \right\},
    \]
where $v_j^\downarrow$ is the entry of $\ket{v}$ with the $j$-th largest absolute value. 
\end{lemma}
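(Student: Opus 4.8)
The plan is to prove the identity by establishing both inequalities, unwinding the definition of $\|P\|_{(k)}$ as a maximum of operator norms of $k\times k$ principal submatrices and then exploiting the fact that $P$ is a projection (so $P^2 = P$ and $P = P^*$). Fix a set $S \subseteq \{1,\dots,n\}$ with $|S| = k$ and let $E_S \in M_n$ be the (rectangular) coordinate inclusion sending $\mathbb{C}^k$ onto the coordinates in $S$, so that the principal submatrix of $P$ on $S$ is $E_S^* P E_S$. Since $P$ is positive semidefinite, so is $E_S^* P E_S$, and hence its operator norm equals $\max\{\braket{w}{E_S^* P E_S|w} : \ket{w}\in\mathbb{C}^k, \|\ket{w}\|=1\} = \max\{\bra{u}P\ket{u} : \ket{u}\in\mathbb{C}^n,\ \|\ket{u}\|=1,\ \op{supp}\ket{u}\subseteq S\}$. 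Taking the maximum over all such $S$, we get
\[
    \|P\|_{(k)} = \max\big\{ \bra{u}P\ket{u} : \ket{u}\in\mathbb{C}^n,\ \|\ket{u}\| = 1,\ \ket{u} \text{ has at most $k$ nonzero entries}\big\}.
\]
This reformulation is the one already used implicitly in the proof of Theorem~\ref{thm:k_coh_equiv}, so it can be invoked quickly.

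Next I would use $P^2 = P$ to write $\bra{u}P\ket{u} = \bra{u}P^*P\ket{u} = \|P\ket{u}\|^2$, and set $\ket{v} := P\ket{u}/\|P\ket{u}\|$ (assuming $P\ket{u}\neq 0$; the zero case is trivial), so $\ket{v}\in\mathrm{range}(P)$ is a unit vector. For the ``$\le$'' direction: given an optimal $S$ and optimal $\ket{u}$ supported on $S$, we have $\bra{u}P\ket{u} = |\braket{u}{v}|^2 \|P\ket{u}\|^{?}$— more cleanly, $\|P\ket{u}\|^2 = |\braket{v}{P|u}|^2 / 1 = |\braket{v}{u}|^2$ since $P\ket{v} = \ket{v}$. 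By Cauchy--Schwarz restricted to the coordinates in $S$, $|\braket{v}{u}|^2 \le \big(\sum_{j\in S}|v_j|^2\big)\big(\sum_{j\in S}|u_j|^2\big) = \sum_{j\in S}|v_j|^2 \le \sum_{j=1}^k |v_j^\downarrow|^2$, since $S$ has only $k$ elements. This shows $\|P\|_{(k)} \le \max\{\sum_{j=1}^k |v_j^\downarrow|^2 : \ket{v}\in\mathrm{range}(P)\}$. For the reverse ``$\ge$'' direction: given any unit $\ket{v}\in\mathrm{range}(P)$, let $S$ be the set of indices of the $k$ largest $|v_j|$, and take $\ket{u}$ to be the normalized restriction of $\ket{v}$ to $S$, i.e. $u_j = v_j/\big(\sum_{i\in S}|v_i|^2\big)^{1/2}$ for $j\in S$ and $0$ otherwise. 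Then $\ket{u}$ has at most $k$ nonzero entries, is a unit vector, and $\bra{u}P\ket{u} \ge |\braket{u}{v}|^2 = \big(\sum_{j\in S}|v_j|^2\big) = \sum_{j=1}^k|v_j^\downarrow|^2$, using $P\ket{v}=\ket{v}$ and Cauchy--Schwarz equality (the restricted $\ket u$ is parallel to the restricted $\ket v$). Hence $\|P\|_{(k)}$ is at least this quantity as well, and the two bounds together give the claimed equality.

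The one subtlety — and the step I expect to require the most care — is the direction of the Cauchy--Schwarz inequality and making sure the equality case genuinely delivers the lower bound rather than only an upper bound. Concretely, one must check that restricting $\ket v$ to its top-$k$ coordinates and renormalizing produces a vector $\ket u$ whose overlap $|\braket u v|^2$ equals exactly $\sum_{j=1}^k|v_j^\downarrow|^2$ (this is where the choice of $S$ as the support of the largest entries is essential), and that $\bra u P\ket u = |\braket u v|^2$ holds because $\ket v$, not $\ket u$, is the eigenvector of $P$. Everything else is routine: the passage between principal submatrices and supported vectors, the positive-semidefiniteness of $P$, and the identity $\bra u P\ket u = \|P\ket u\|^2$ when $P$ is an orthogonal projection.
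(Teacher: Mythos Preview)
Your argument is correct and proceeds by a different route than the paper. The paper proves the preceding lemma $\|PQP\|=\|QPQ\|$ for two projections, takes $Q$ to be the coordinate projection onto an optimal $k$-set, and then uses the numerical radius of $PQP$ to force the maximizing vector into $\mathrm{range}(P)$; the equality with $\sum_j|v_j^\downarrow|^2$ then drops out from $\bra{v}Q\ket{v}$. You instead establish the two inequalities directly: for ``$\le$'' you push an optimal $k$-supported $\ket{u}$ through $P$ to get $\ket{v}\in\mathrm{range}(P)$ and apply Cauchy--Schwarz on the support; for ``$\ge$'' you restrict a given $\ket{v}\in\mathrm{range}(P)$ to its top-$k$ coordinates and use $P\succeq\ketbra{v}{v}$. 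Your approach is self-contained (no auxiliary lemma needed) and makes both directions explicit; the paper's approach is terser once $\|PQP\|=\|QPQ\|$ is available and ties in with the canonical-angles picture mentioned there.

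One small slip to clean up: in your final paragraph you write ``$\bra{u}P\ket{u}=|\braket{u}{v}|^2$'' for the lower-bound direction, but in general this is only $\ge$ (as you correctly wrote in the body of the argument); the inequality suffices, and it is exactly the statement $P\succeq\ketbra{v}{v}$ for a unit $\ket{v}\in\mathrm{range}(P)$. It would also tighten the write-up to state that inequality explicitly rather than gesturing at ``$P\ket{v}=\ket{v}$''.
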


\begin{proof}  Let $i_1, i_2,...,i_k$ be the rows of the $k \times k$ principal submatrix of $P$ with the largest operator norm.  Let $Q$ be the projection onto $\mathrm{span}\{\mathbf{e}_{i_1},\mathbf{e}_{i_2},\ldots,\mathbf{e}_{i_k}\}$.  Then $ \|P\|_{(k)} =\|QPQ \|=\| PQP  \|$.  It is well known that the operator norm of a Hermitian matrix is equal to its numerical radius (i.e $\| H\| = \max_{\| x\|=1} |\bra{x}H\ket{x}|$) and we apply this fact to $PQP$.  Let $\ket{v}$ be the unit vector which maximizes $\bra{x}PQP\ket{x}$.  If $\ket{v}\not \in \mathrm{range}(P)$, we can replace $\ket{x}=\ket{v}$ with $\ket{x}=P\ket{v} / \Vert P\ket{v}\Vert$ giving us an even larger value for $\bra{x}PQP\ket{x}$; hence  $\ket{v}\in \mathrm{range}(P)$.  Therefore  $ \|P\|_{(k)} =\bra{v}PQP\ket{v} =\bra{v}Q\ket{v}=\sum_{j=1}^k \big| v_{i_j} \big|^2=\sum_{j=1}^k\big|v_j^\downarrow\big|^2$.  The final equality follows from the fact that if  $\{ v_{i_j}\}_{j=1}^k$ were not the $k$ largest entries of $\ket{v}$ in absolute value, this would contradict our assumption that $i_1$, $i_2$, $\ldots$, $i_k$ are the rows of the $k \times k$ principal submatrix of $P$ with the largest operator norm.
\end{proof}

We use the above lemmas to get a rank bound that generalizes Corollary~\ref{cor:mod_tr_norm_small_rank} to this setting of $k$-coherence:

\begin{corollary}\label{cor:mod_ktr_norm_small_rank}
    Suppose $\rho \in M_n^{+}$ is a mixed quantum state with $D_{\textup{k-coh}}(\rho) = 1$. Then
    \[
        \rank(\rho) \leq \frac{n(n + 1 - 2k)}{2(n - k)}.
    \]
\end{corollary}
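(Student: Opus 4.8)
The plan is to reduce everything to a statement about the spectral projection $P$ onto $\mathrm{range}(\rho)$, then to sum one carefully chosen eigenvalue inequality over all size-$k$ coordinate sets and combine the result with the identity $\tr(P^2)=\tr(P)$.

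Concretely, write $r=\rank(\rho)=\rank(P)=\tr(P)$. By Theorem~\ref{thm:k_coh_equiv}, $D_{\textup{k-coh}}(\rho)=1$ holds exactly when $\|P\|_{(k)}\le 1/2$, i.e.\ when every $k\times k$ principal submatrix $P[S]$ (with $|S|=k$) satisfies $0\preceq P[S]\preceq \tfrac12 I_k$. Every eigenvalue $\lambda$ of such a $P[S]$ lies in $[0,\tfrac12]$, so $\lambda^2\le\tfrac12\lambda$; summing over the eigenvalues of $P[S]$ gives the per-set bound
\[
  \sum_{i,j\in S}|P_{i,j}|^2 \;=\; \tr\!\big(P[S]^2\big)\;\le\;\tfrac12\,\tr\!\big(P[S]\big)\;=\;\tfrac12\sum_{i\in S}P_{i,i}.
\]
I would then sum this over all $\binom{n}{k}$ choices of $S$. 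On the right each index $i$ occurs in $\binom{n-1}{k-1}$ of the sets, producing $\tfrac12\binom{n-1}{k-1}\tr(P)=\tfrac12\binom{n-1}{k-1}r$; on the left each diagonal pair $(i,i)$ occurs $\binom{n-1}{k-1}$ times and each off-diagonal pair $(i,j)$ occurs $\binom{n-2}{k-2}$ times. Setting $a=\sum_i P_{i,i}^2$ and $b=\sum_{i\neq j}|P_{i,j}|^2$ this yields $\binom{n-1}{k-1}a+\binom{n-2}{k-2}b\le\tfrac{r}{2}\binom{n-1}{k-1}$, and since $\binom{n-2}{k-2}/\binom{n-1}{k-1}=(k-1)/(n-1)$ this is $a+\tfrac{k-1}{n-1}b\le\tfrac r2$.

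The crucial input now is that $P$ is a projection: $a+b=\tr(P^2)=\tr(P)=r$. Substituting $a=r-b$ into the last inequality and simplifying gives $b\ge\tfrac{(n-1)r}{2(n-k)}$. For the reverse estimate I would apply Cauchy--Schwarz to the (nonnegative) diagonal: $a=\sum_i P_{i,i}^2\ge\tfrac1n\big(\sum_i P_{i,i}\big)^2=r^2/n$, hence $b=r-a\le r(n-r)/n$. Comparing the two bounds on $b$ and dividing by $r>0$ (permissible since $\rho$ is a state, so $r\ge1$) gives $\tfrac{n-1}{2(n-k)}\le\tfrac{n-r}{n}$, which rearranges to exactly $r\le\tfrac{n(n+1-2k)}{2(n-k)}$.

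The only place where insight is really needed is the choice of which consequence of $P[S]\preceq\tfrac12 I_k$ to average: summing the obvious trace bound $\tr(P[S])\le k/2$ merely reproduces $r\le n/2$, whereas it is the second-moment bound $\tr(P[S]^2)\le\tfrac12\tr(P[S])$, coupled with the projection identity $\tr(P^2)=\tr(P)$ and Cauchy--Schwarz, that produces the sharper estimate. After that the argument is routine binomial bookkeeping, so I do not anticipate any further obstacle; it is worth remarking only that when the stated right-hand side is less than $1$ the conclusion holds vacuously, since no state can have $D_{\textup{k-coh}}(\rho)=1$ in that regime (this is consistent with the $k=1$ specialization recovering Corollary~\ref{cor:mod_tr_norm_small_rank}).
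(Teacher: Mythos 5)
Your argument is correct, and it reaches the stated bound by a genuinely different route from the paper. The paper's proof goes through Lemma~\ref{twoproj}: it picks a unit vector $\ket{w}\in\mathrm{range}(P)$ attaining $\|P\|_{(1)}$, uses the sorted-entry averaging bound $\sum_{j=2}^k|w_j^\downarrow|^2\ge\frac{k-1}{n-1}(1-\|P\|_{(1)})$ to get $\|P\|_{(k)}\ge\|P\|_{(1)}+\frac{k-1}{n-1}(1-\|P\|_{(1)})$, and combines this with $\|P\|_{(1)}\ge\tr(P)/n$ and $\|P\|_{(k)}\le 1/2$. You instead never leave the entries of $P$: you average the second-moment inequality $\tr(P[S]^2)\le\frac12\tr(P[S])$ over all $\binom{n}{k}$ principal subsets, exploit the projection identity $\tr(P^2)=\tr(P)$ to trade diagonal mass $a$ against off-diagonal mass $b$, and close with Cauchy--Schwarz on the diagonal. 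I checked the combinatorial weights ($\binom{n-1}{k-1}$ for diagonal pairs, $\binom{n-2}{k-2}$ for off-diagonal pairs, ratio $(k-1)/(n-1)$) and the final algebra; both chains of inequalities produce exactly $\rank(\rho)\le\frac{n(n+1-2k)}{2(n-k)}$, which is a reassuring consistency check. Your approach is arguably more elementary, since it avoids Lemma~\ref{twoproj} and the preceding two-projection lemma entirely, at the cost of being specific to the entrywise structure of $\|\cdot\|_{(k)}$; the paper's variational argument via a single vector in $\mathrm{range}(P)$ is the one that generalizes to the $S(k)$-norm setting of Section~\ref{sec:k_ent}. One tiny point of presentation: your closing remark that the statement ``holds vacuously'' when the right-hand side is below $1$ is really a consequence of the proof itself (the derived inequality contradicts $r\ge1$), not an independent observation, but this does not affect correctness.
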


\begin{proof}
    Theorem~\ref{thm:k_coh_equiv} tells us that if $D_{\textup{k-coh}}(\rho) = 1$ then $\|P\|_{(k)} \leq 1/2$. On the other hand, we also have some lower bounds on $\|P\|_{(k)}$: it must be the case that $\|P\|_{(1)} \geq \tr(P)/n$, since otherwise the sum of the diagonal entries of $P$ would be strictly less than $n(\tr(P)/n) = \tr(P)$, which is a contradiction.
    
    For larger values of $k$, we claim that $\|P\|_{(k)} \geq \|P\|_{(1)} + (1 - \|P\|_{(1)})(k-1)/(n-1)$. If we let $\ket{w}$ be a particular $\ket{v}$ attaining the supremum for $\|P\|_{(1)}$ given in Lemma \ref{twoproj}, then we have
    \[
        \sum_{j=2}^n \big|w_j^\downarrow\big|^2 = \sum_{j=1}^n \big|w_j^\downarrow\big|^2 - \|P\|_{(1)} = 1 - \|P\|_{(1)}.
    \]
    Since the $\big|w_j^\downarrow\big|$'s are sorted in non-increasing order, this implies
    \[
        \sum_{j=2}^k \big|w_j^\downarrow\big|^2 \geq \frac{k-1}{n-1}(1 - \|P\|_{(1)}).
    \]
    Finally, this gives us
    \begin{align}\label{eq:p_k_ineq}
        \|P\|_{(k)} \geq \sum_{j=1}^k\big|w_j^\downarrow\big|^2 = \|P\|_{(1)} + \sum_{j=2}^k\big|w_j^\downarrow\big|^2 \geq \|P\|_{(1)} + \frac{k-1}{n-1}(1 - \|P\|_{(1)}).
    \end{align}
    If we substitute the inequality $\|P\|_{(1)} \geq \tr(P)/n = \rank(\rho)/n$ into Inequality~\eqref{eq:p_k_ineq}, and then use the fact that $1/2 \geq \|P\|_{(k)}$ and solve for $\rank(\rho)$, we get exactly the inequality given in the statement of the corollary.
\end{proof}

In the special case when $k = 1$, Corollary~\ref{cor:mod_ktr_norm_small_rank} says that $D_{\textup{coh}}(\rho) = 1$ implies $\rank(\rho) \leq n/2$, agreeing with Corollary~\ref{cor:mod_tr_norm_small_rank}. At the other extreme, if $k > n/2$ then Corollary~\ref{cor:mod_ktr_norm_small_rank} says that $D_{\textup{k-coh}}(\rho) = 1$ implies $\rank(\rho) \leq 0$, which of course is impossible. It follows that if $D_{\textup{k-coh}}(\rho) = 1$ then $k \leq n/2$. Furthermore, in the $k = n/2$ case if $D_{\textup{k-coh}}(\rho) = 1$ then $\rank(\rho) \leq 1$, so $\rho$ must be pure.

We end this subsection by making a (somewhat tangential) note that in the numerical linear algebra literature, there is a concept called the factor width of a matrix first defined in \cite{boman2005factor}.  That paper deals entirely with real matrices, so we give the natural extension of the definition to complex matrices.

\begin{definition}\cite[Definition 1]{boman2005factor}
  The factor width of a complex positive semidefinite matrix $A$ is the smallest integer $k$
for which there exists a complex (rectangular) matrix $V$ where $A = VV^*$ and each column
of $V$ contains at most $k$ non-zero entries.  
\end{definition}

It is easy to see that a density matrix is $k$-coherent if and only if it has factor width at most $k$.  In \cite{boman2005factor}, there is a characterization of matrices which have factor rank at most two. We first need the following definitions:

\begin{definition}\cite{varga1976recurring}  Let $A\in M_n$, then its comparison matrix (denoted $M(A)$) is an $n$ by $n$ matrix with $M(A)_{i,i}=\vert A_{i,i}\vert$ for all $i$ and $M(A)_{i,j}=-\vert A_{i,j}\vert$ whenever $i\neq j$.
  \end{definition}

\begin{definition} A square matrix is said to be an $H$-matrix if its comparison matrix is positive semidefinite. 
\end{definition}

We can now restate the main result of \cite{boman2005factor} in terms of $2$-coherent states.

\begin{theorem}\cite[Theorem 9]{boman2005factor}
    A density matrix is $2$-coherent if and only if it is an $H$-matrix.
\end{theorem}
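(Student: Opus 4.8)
The plan is to prove the two implications separately, after first recording a convenient reformulation: a positive semidefinite $\rho \in M_n$ is $2$-coherent if and only if it can be written as $\rho = D + \sum_{i<j} B_{ij}$, where $D \in M_n^{+}$ is diagonal and each $B_{ij} \in M_n^{+}$ is supported on the $2\times 2$ principal submatrix indexed by $\{i,j\}$. Indeed, grouping the rank-$1$ summands $\mathbf{v}\mathbf{v}^*$ of a $2$-coherent $\rho$ according to the support of $\mathbf{v}$ produces such a decomposition, and conversely each $B_{ij}$ is itself a sum of rank-$1$ positive semidefinite matrices supported on $\{i,j\}$ (while $D = \sum_i D_{i,i}\,\mathbf{e_i}\mathbf{e_i}^*$), so $\rho$ is again a sum of rank-$1$ positive semidefinite matrices with at most two nonzero entries each.

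For the forward implication ($2$-coherent $\Rightarrow$ $H$-matrix), take such a decomposition. For $i \ne j$, only the block $B_{ij}$ contributes to the $(i,j)$ entry of $\rho$, so $|\rho_{i,j}| = |(B_{ij})_{i,j}|$; and every diagonal entry of $\rho$ is a sum of nonnegative numbers, so $|\rho_{i,i}| = \rho_{i,i}$. Replace each $B_{ij}$ by the matrix $\widehat{B}_{ij}$ with the same diagonal but with $(i,j)$ and $(j,i)$ entries both equal to $-|(B_{ij})_{i,j}|$; its $2\times 2$ active block has nonnegative diagonal and determinant $(B_{ij})_{i,i}(B_{ij})_{j,j} - |(B_{ij})_{i,j}|^2 \ge 0$ by positivity of $B_{ij}$, so $\widehat{B}_{ij} \succeq 0$. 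Matching entries then shows $M(\rho) = D + \sum_{i<j} \widehat{B}_{ij}$, a sum of positive semidefinite matrices, hence $M(\rho) \succeq 0$.

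For the reverse implication ($H$-matrix $\Rightarrow$ $2$-coherent), first discard every index $i$ with $\rho_{i,i} = 0$, which by positivity of $\rho$ forces the whole $i$th row and column of $\rho$ (and of $M(\rho)$) to vanish. Then $B := M(\rho)$ is a symmetric matrix with strictly positive diagonal, nonpositive off-diagonal entries, and $B \succeq 0$ --- that is, a (possibly singular, possibly reducible) symmetric M-matrix. I would invoke the classical structure theory of M-matrices to conclude that $B$ is generalized diagonally dominant: there is a positive diagonal matrix $E$ for which $EBE$ is weakly diagonally dominant. Setting $c_{ij} := -(EBE)_{i,j} \ge 0$ for $i<j$ and $d_i := (EBE)_{i,i} - \sum_{j\ne i}|(EBE)_{i,j}| \ge 0$ (the last inequality being exactly weak diagonal dominance), we obtain $EBE = \sum_{i<j} c_{ij}(\mathbf{e_i}-\mathbf{e_j})(\mathbf{e_i}-\mathbf{e_j})^* + \sum_i d_i\,\mathbf{e_i}\mathbf{e_i}^*$, and conjugating by $E^{-1}$ keeps each summand a rank-$1$ positive semidefinite matrix with at most two nonzero entries; hence $B = \sum_{i<j} G_{ij} + \sum_i h_i\,\mathbf{e_i}\mathbf{e_i}^*$ with each $G_{ij}\in M_n^{+}$ supported on $\{i,j\}$ and each $h_i \ge 0$. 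Finally I would reinstate the phases of $\rho$: let $\widehat{G}_{ij}$ agree with $G_{ij}$ on the diagonal but have $(i,j)$ entry $\rho_{i,j}$ (and $(j,i)$ entry $\overline{\rho_{i,j}}$); since the off-diagonal magnitude of $G_{ij}$ is $|\rho_{i,j}|$, the $2\times 2$ block of $\widehat{G}_{ij}$ has the same diagonal and the same determinant, so $\widehat{G}_{ij}$ is still positive semidefinite, and comparing entries (using $M(\rho)_{i,i}=\rho_{i,i}$ on the diagonal, and that distinct blocks occupy distinct off-diagonal positions) gives $\rho = \sum_{i<j}\widehat{G}_{ij} + \sum_i h_i\,\mathbf{e_i}\mathbf{e_i}^*$. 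This exhibits $\rho$ as a sum of positive semidefinite matrices each supported on at most two coordinates, so $\rho$ is $2$-coherent.

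The main obstacle is the reverse direction, and specifically the step asserting that a symmetric positive semidefinite M-matrix is generalized diagonally dominant. The nonsingular (Stieltjes) case is standard, so the genuine care is needed for singular and reducible matrices: one reduces to the irreducible diagonal blocks, and for a singular irreducible block $B'$ one uses the Perron--Frobenius fact that $\ker(B')$ is one-dimensional and spanned by a strictly positive vector $\mathbf{w}$, so that scaling by $\mathrm{diag}(\mathbf{w})$ yields a matrix with zero row sums and hence (given nonpositive off-diagonal entries) weak diagonal dominance. The phase-reinsertion step is conceptually routine but should be stated carefully, as it is precisely what transports the real, combinatorial decomposition of $M(\rho)$ back to the complex density matrix $\rho$.
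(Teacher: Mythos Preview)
The paper does not prove this theorem at all; it merely quotes \cite[Theorem~9]{boman2005factor} and restates it in the language of $2$-coherence. So there is no ``paper's own proof'' to compare against, and your proposal in fact supplies what the paper omits.

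Your argument is correct. The forward direction is clean: grouping the rank-one summands by support and then flipping the off-diagonal sign of each $2\times 2$ block preserves positive semidefiniteness and assembles $M(\rho)$ as a sum of PSD pieces. In the reverse direction, the key step---that a symmetric PSD matrix with nonpositive off-diagonal entries is generalized diagonally dominant---is exactly the M-matrix fact you identify, and your handling of the singular irreducible case via Perron--Frobenius (the null space is spanned by a strictly positive vector $\mathbf{w}$, and scaling by $\operatorname{diag}(\mathbf{w})$ gives zero row sums) is the standard and correct argument; reducibility is harmless because symmetry forces a block-diagonal decomposition into irreducible pieces. The phase-reinstatement step is also sound, since each $G_{ij}$ carries exactly the magnitude $|\rho_{i,j}|$ in its sole off-diagonal slot, so swapping in $\rho_{i,j}$ preserves the $2\times 2$ determinant.

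One point worth emphasizing: the paper explicitly remarks that \cite{boman2005factor} treats only real matrices and that the complex extension is being asserted without further justification. Your phase-reinstatement step is precisely the (small) extra ingredient needed to make the complex version go through, so it is not just bookkeeping---it is the part of your argument that the cited source does not already cover.
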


\subsection{Entanglement and k-Entanglement}\label{sec:k_ent}

When we apply Theorem~\ref{thm:mod_tr_dist_resource} to the resource theory of entanglement (i.e., when we take $C = C_{\textup{ent}}$), we see that the modified trace distance of entanglement satisfies $D_{\textup{ent}}(\rho) = 1$ if and only if $\bra{v}P\ket{v} \leq 1/2$ for every pure \emph{product} state $\ket{v} = \ket{x} \otimes \ket{y} \in \mathbb{C}^m \otimes \mathbb{C}^n$, where $P$ is the orthogonal projection onto $\mathrm{range}(\rho)$. In the terminology of \cite{JK10,JK11}, this is equivalent to the statement that $\|P\|_{S(1)} \leq 1/2$, where $\|\cdot\|_{S(1)}$ is the norm defined by
\begin{align}\label{eq:s1_norm}
    \|X\|_{S(1)} & \defeq \max\big\{ (\bra{x} \otimes \bra{y})X(\ket{x} \otimes \ket{y}) : \ket{x} \in \mathbb{C}^m, \ket{y} \in \mathbb{C}^n \big\}.
\end{align}

A natural generalization of entanglement is the resource theory of \emph{$k$-entanglement}, in which the set $C = C_{\textup{k-ent}}$ of free states consists of states with \emph{Schmidt number} \cite{TH00} at most $k$ (where $1 \leq k \leq \min\{m,n\}$ is an integer):
\begin{align*}
    C_{\textup{k-ent}} & \defeq \left\{ \sum_j \mathbf{v_j}\mathbf{v}_{\mathbf{j}}^* : \text{there exist} \ \{\mathbf{x_{i,j}}\} \subset \mathbb{C}^m,  \{\mathbf{y_{i,j}}\} \subset \mathbb{C}^n \ \text{such that} \ \mathbf{v_j} = \sum_{i=1}^k \mathbf{x_{i,j}} \otimes \mathbf{y_{i,j}} \ \text{for all} \ j\right\}.
\end{align*}
By the spectral decomposition theorem, if $k = 1$ then we have $C_{\textup{1-ent}} = C_{\textup{ent}}$, whereas if $k = \min\{m,n\}$ then we have $C_{\textup{min\{m,n\}-ent}} = (M_m \otimes M_n)^{+}$. For intermediate values of $k$, these sets satisfy a chain of inclusions that is analogous to that of $k$-coherence:
\[
    C_{\textup{ent}} = C_{\textup{1-ent}} \subsetneq C_{\textup{2-ent}} \subsetneq C_{\textup{3-ent}} \subsetneq \cdots \subsetneq C_{\textup{(min\{m,n\}-1)-ent}} \subsetneq C_{\textup{min\{m,n\}-ent}} = (M_m \otimes M_n)^{+}.
\]

In this setting, the natural generalization of the norm defined in Equation~\eqref{eq:s1_norm} is
\begin{align*}
    \|X\|_{S(k)} & \defeq \max\left\{ \bra{v}X\ket{v} : \text{there exist} \ \{\mathbf{x_{i}}\} \subset \mathbb{C}^m, \{\mathbf{y_{i}}\} \subset \mathbb{C}^n \ \text{such that} \ \ket{v} = \sum_{i=1}^k \mathbf{x_{i}} \otimes \mathbf{y_{i}} \right\}.
\end{align*}
This quantity is also a norm, and an almost identical argument to the one from the start of this section demonstrates the following theorem:

\begin{theorem}\label{thm:k_ent_equiv}
    Suppose $\rho \in (M_m \otimes M_n)^{+}$ is a mixed quantum state and $P$ is the orthogonal projection onto $\mathrm{range}(\rho)$. Then $D_{\textup{k-ent}}(\rho) = 1$ if and only if $\|P\|_{S(k)} \leq 1/2$.
\end{theorem}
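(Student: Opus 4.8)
The plan is to derive this theorem directly from Theorem~\ref{thm:mod_tr_dist_resource}, in exactly the way that Theorem~\ref{thm:mod_tr_norm_range} (coherence) and Theorem~\ref{thm:k_coh_equiv} ($k$-coherence) were obtained. First I would check that $C_{\textup{k-ent}}$ is a closed convex cone inside $(M_m \otimes M_n)^{+}$: it is manifestly closed under addition and under multiplication by non-negative scalars, and closedness follows because $C_{\textup{k-ent}}$ is the conical hull of the set of matrices $\ketbra{v}{v}$ with $\ket{v}$ a unit vector of Schmidt rank at most $k$, and this set is compact and does not contain the zero matrix. This puts us in a position to invoke Theorem~\ref{thm:mod_tr_dist_resource} with $C = C_{\textup{k-ent}}$.

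Next, by the equivalence of conditions (a) and (c) in Theorem~\ref{thm:mod_tr_dist_resource}, $D_{\textup{k-ent}}(\rho) = 1$ holds if and only if $\tr(P\sigma) \leq 1/2$ for every mixed quantum state $\sigma \in C_{\textup{k-ent}}$. Since $\sigma \mapsto \tr(P\sigma)$ is linear and, as noted in the discussion following Theorem~\ref{thm:mod_tr_dist_resource}, every mixed state in $C_{\textup{k-ent}}$ is a convex combination of states lying on extreme rays of the cone, it suffices to impose this inequality only on the extreme rays. Those extreme rays are exactly the pure states $\ketbra{v}{v}$ for which $\ket{v}$ has Schmidt rank at most $k$, i.e.\ $\ket{v} = \sum_{i=1}^{k} \mathbf{x_{i}} \otimes \mathbf{y_{i}}$ for some $\mathbf{x_{i}} \in \mathbb{C}^m$ and $\mathbf{y_{i}} \in \mathbb{C}^n$. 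For such a state, $\tr(P\ketbra{v}{v}) = \bra{v}P\ket{v}$, so the condition becomes $\bra{v}P\ket{v} \leq 1/2$ for every $\ket{v}$ of Schmidt rank at most $k$. Taking the supremum over all such $\ket{v}$ (attained by compactness), this is precisely the statement $\|P\|_{S(k)} \leq 1/2$, by the definition of $\|\cdot\|_{S(k)}$, which completes the argument.

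Every step here is routine, and the theorem really is a near-immediate corollary of the machinery already in place; the only point that needs a little care — and the one I would flag as the main (mild) obstacle — is the justification that the extreme rays of $C_{\textup{k-ent}}$ are exactly the Schmidt-rank-$\le k$ pure states, which underlies the reduction from ``all $\sigma \in C_{\textup{k-ent}}$'' to ``all Schmidt-rank-$\le k$ unit vectors''. One checks that every element of $C_{\textup{k-ent}}$ is by definition a finite non-negative combination of such rank-one pieces, and that a rank-one matrix $\mathbf{v}\mathbf{v}^*$ with $\mathbf{v}$ of Schmidt rank exceeding $k$ cannot be split nontrivially within the cone (arguing via the ranges of the summands). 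With that in hand, the rest is immediate from Theorem~\ref{thm:mod_tr_dist_resource} and the definition of $\|\cdot\|_{S(k)}$.
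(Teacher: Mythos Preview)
Your proposal is correct and follows essentially the same route as the paper: apply Theorem~\ref{thm:mod_tr_dist_resource} with $C = C_{\textup{k-ent}}$, reduce condition~(c) to pure states of Schmidt rank at most $k$ (as the paper already notes in the paragraph following Theorem~\ref{thm:mod_tr_dist_resource}), and recognize the resulting inequality as $\|P\|_{S(k)} \le 1/2$. The paper in fact gives no separate proof at all---it simply remarks that ``an almost identical argument to the one from the start of this section'' yields the result---so your write-up is, if anything, more detailed than what appears there.
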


Unlike the case of coherence, the condition of Theorem~\ref{thm:k_ent_equiv} is difficult to check even in the $k = 1$ case. Indeed, computation of $\|\cdot\|_{S(1)}$ is NP-hard \cite{Gha10,Gur03} (which is expected, since computation of $D_{\textup{ent}}$ is also NP-hard). Despite this, numerous bounds on $\|P\|_{S(k)}$ are known, which let us derive rank bounds like the following:

\begin{corollary}\label{cor:mod_kent_norm_small_rank}
    Suppose $\rho \in (M_m \otimes M_n)^{+}$ is a mixed quantum state with $D_{\textup{k-ent}}(\rho) = 1$. Then
    \begin{align*}
        \mathrm{rank}(\rho) & \leq \frac{mn(\min\{m,n\} - 2k + 1)}{2(\min\{m,n\}-k)} \quad \text{and} \\
        \mathrm{rank}(\rho) & \leq \frac{(n + m + 2 - 4k)^2 - (n-m)^2}{4}.
    \end{align*}
\end{corollary}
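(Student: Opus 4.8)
The plan is to mimic the proof of Corollary~\ref{cor:mod_ktr_norm_small_rank} almost verbatim, replacing the $k$-coherence norm $\|\cdot\|_{(k)}$ with the $k$-entanglement norm $\|\cdot\|_{S(k)}$. By Theorem~\ref{thm:k_ent_equiv}, the hypothesis $D_{\textup{k-ent}}(\rho)=1$ is equivalent to $\|P\|_{S(k)}\le 1/2$, where $P$ is the orthogonal projection onto $\mathrm{range}(\rho)$ and $\tr(P)=\rank(\rho)$. So each displayed bound will follow from a matching \emph{lower} bound on $\|P\|_{S(k)}$ in terms of $\tr(P)$, which I then combine with $\|P\|_{S(k)}\le 1/2$ and solve for $\tr(P)=\rank(\rho)$. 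This is exactly the structure of the $k$-coherence argument; the only new ingredient is a pair of known lower bounds on $\|P\|_{S(k)}$ for a projection $P$ on $\mathbb{C}^m\otimes\mathbb{C}^n$, which is presumably why the corollary is stated with two separate inequalities.

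For the first inequality, I would use a lower bound of the form $\|P\|_{S(k)}\ge \frac{(\min\{m,n\}-2k+1)\tr(P)+\text{(correction)}}{\text{something}}$, obtained by the same "averaging over an orthonormal basis of product-friendly subspaces" idea used for $k$-coherence: there the key facts were $\|P\|_{(1)}\ge \tr(P)/n$ and a concavity/sorting step boosting $\|P\|_{(1)}$ to $\|P\|_{(k)}$. The entanglement analogue should use that $\|P\|_{S(1)}\ge \tr(P)/(mn)$ (since $\|P\|_{S(1)}\ge \|P\|_{(1)}\ge \tr(P)/(mn)$, or more tightly via the maximal product-vector overlap argument) together with a known estimate---from the references \cite{JK10,JK11}---that relates $\|P\|_{S(k)}$ to $\|P\|_{S(1)}$ for projections, analogous to Inequality~\eqref{eq:p_k_ineq}. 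Solving the resulting linear inequality $1/2\ge \|P\|_{S(k)}\ge (\text{linear in }\tr(P))$ for $\tr(P)$ yields the first stated bound; matching denominators/numerators with $\min\{m,n\}-k$ in the denominator confirms this is the right track.

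For the second inequality I would invoke a different, more "dimension-counting" style lower bound on $\|P\|_{S(k)}$: there is a known result (again traceable to \cite{JK10,JK11} and the literature on Schmidt-number witnesses) asserting that any subspace of $\mathbb{C}^m\otimes\mathbb{C}^n$ of dimension exceeding a certain threshold---roughly $(m-k)(n-k)$---must contain a vector of Schmidt rank at most $k$, and quantitatively that $\|P\|_{S(k)}$ grows like $\big(1+2\sqrt{\rank(P)}+\dots\big)^2\big/(something)$ once $\rank(P)$ is large. Setting this lower bound $\le 1/2$ and rearranging gives an inequality of the shape $4\|P\|_{S(k)}\ge \big(\sqrt{\rank(\rho)}+\text{const}\big)^2 - \text{const}$, i.e. $\rank(\rho)\le \big((n+m+2-4k)^2-(n-m)^2\big)/4$ after squaring out. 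So the second bound comes from a "square-root" lower bound rather than the "linear" one used for the first.

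The main obstacle is pinning down the exact statements of the two auxiliary lower bounds on $\|P\|_{S(k)}$ and verifying that, after the algebra, they produce precisely the two claimed fractions---in particular getting the constants inside the squares right for the second bound and the $\min\{m,n\}$-versus-$\min\{m,n\}-k$ bookkeeping right for the first. The probabilistic/averaging step is routine once the lemma is in hand, exactly as in Corollary~\ref{cor:mod_ktr_norm_small_rank}; the combinatorial/geometric input bounding the Schmidt structure of subspaces by their dimension is where all the real content sits, and it is essentially imported wholesale from the cited $S(k)$-norm literature rather than reproved here.
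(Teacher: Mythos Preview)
Your overall strategy is correct and matches the paper's: use Theorem~\ref{thm:k_ent_equiv} to get $\|P\|_{S(k)}\le 1/2$, combine with known lower bounds on $\|P\|_{S(k)}$ for a projection $P$, and solve for $\rank(P)$. The paper does exactly this, but far more directly than your plan suggests: it simply quotes both inequalities from \cite[Theorem~4.15]{JK10}, namely
\[
\|P\|_{S(k)}\ge \frac{(k-1)mn+(\min\{m,n\}-k)\rank(P)}{mn(\min\{m,n\}-1)}
\quad\text{and}\quad
\|P\|_{S(k)}\ge \min\Big\{1,\ \frac{k}{\big\lceil \tfrac{1}{2}\big(n+m-\sqrt{(n-m)^2+4\rank(P)-4}\big)\big\rceil}\Big\},
\]
and then does the algebra. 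Your attempt to \emph{re-derive} these bounds by mimicking the $k$-coherence argument (bootstrapping from $\|P\|_{S(1)}$ via a sorting/concavity step analogous to Inequality~\eqref{eq:p_k_ineq}) is unnecessary and also shaky: the key ingredient in the $k$-coherence case was Lemma~\ref{twoproj}, which expresses $\|P\|_{(k)}$ via the $k$ largest-magnitude coordinates of a range vector, and there is no clean Schmidt-rank analogue of that characterization. So drop the speculative derivation and just invoke the cited theorem; the rest of your outline is exactly right.
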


\begin{proof}
    We use \cite[Theorem 4.15]{JK10}, which says that every orthogonal projection $P$ satisfies the pair of inequalities
    \begin{align*}
        \|P\|_{S(k)} & \geq \frac{(k-1)mn + (\min\{m,n\}-k){\rm rank}(P)}{mn(\min\{m,n\}-1)} \quad \text{and} \\
        \|P\|_{S(k)} & \geq \min\Big\{1,\frac{k}{\big\lceil \frac{1}{2}\big( n + m - \sqrt{(n-m)^2 + 4{\rm rank}(P) - 4} \big) \big\rceil}\Big\}.
    \end{align*}
    
    If $D_{\textup{k-ent}}(\rho) = 1$ then Theorem~\ref{thm:k_ent_equiv} tells us that $1/2 \geq \|P\|_{S(k)}$, which gives the claimed result after a bit of routine (but somewhat ugly) algebra.
\end{proof}

In particular, the first inequality of Corollary~\ref{cor:mod_kent_norm_small_rank} tells us that if $k > \min\{m,n\}/2$ then $\mathrm{rank}(\rho) \leq 0$, which impossible. It follows that there only exist quantum states $\rho$ with $D_{\textup{k-ent}}(\rho) = 1$ when $k \leq \min\{m,n\}/2$. Furthermore, in the extreme case when $k = \min\{m,n\}/2$, the second inequality of Corollary~\ref{cor:mod_kent_norm_small_rank} says that any state with $D_{\textup{k-ent}}(\rho) = 1$ must have $\mathrm{rank}(\rho) \leq |n - m + 1|$ (so if we also have $m = n$ then $\rho$ must be pure). At the other extreme, if $k = 1$ then the first inequality of Corollary~\ref{cor:mod_kent_norm_small_rank} is the stronger one, and it says that if $D_{\textup{ent}}(\rho) = 1$ then $\mathrm{rank}(\rho) \leq mn/2$.

\subsection{An Application to the NPPT Bound Entanglement Problem}\label{sec:nppt_be}

Given a quantum state $\rho \in (M_m \otimes M_n)^{+}$, one often wants to know if it can be \emph{distilled}: transformed into (an arbitrarily good approximation of) the \emph{pure maximally entangled} state
\[
    \ket{\phi_+} \defeq \frac{1}{\sqrt{\min\{m,n\}}}\left(\sum_{j=1}^{\min\{m,n\}} \mathbf{e_j} \otimes \mathbf{e_j}\right)
\]
via local operations \cite{HHH97}. In some cases where $\rho$ is undistillable, multiple tensor copies of it are distillable \cite{Wat04}, so we say that $\rho$ is \emph{$r$-copy (un)distillable} if $\rho^{\otimes r} \in (M_m^{\otimes r} \otimes M_n^{\otimes r})^{+}$ is (un)distillable.

It is straightforward to show that if $\rho$ is separable (i.e., if $\rho \in C_{\textup{ent}}$) then it is $r$-copy undistillable for all $r \geq 1$. More surprisingly, it has also been shown that there is a class of entangled (i.e., not separable) states called \emph{PPT states} that are $r$-copy undistillable for all $r \geq 1$ \cite{HHH98}. Entangled states with this undistillability property are called \emph{bound entangled}.

The \emph{NPPT bound entanglement} problem asks whether or not there are any other (i.e., non-PPT) states that are $r$-copy undistillable for all $r \geq 1$, and it is one of the central open questions in the theory of quantum entanglement \cite{HRZ20}. It has been shown that it suffices to answer this problem for a single-parameter family of states called \emph{Werner states} \cite{HH99}, and it has furthermore been conjectured that the specific Werner state
\begin{align}\label{eq:werner_state_defn}
    \rho_{2/n} \defeq \left(\frac{1}{n^2 - 2}\right)I - \frac{2}{n(n^2 - 2)}\left(\sum_{i,j=1}^{n} \mathbf{e}_{\mathbf{i}}\mathbf{e}_{\mathbf{j}}^* \otimes \mathbf{e}_{\mathbf{j}}\mathbf{e}_{\mathbf{i}}^*\right) \in (M_n \otimes M_n)^{+}
\end{align}
is bound entangled for all $n \geq 4$ \cite{DSSTT00}. Indeed, it is straightforward to check that $\rho_{2/n}$ is not PPT and that it is ($1$-copy) undistillable. However, even proving the simplest case of $2$-copy undistillability of this state when $n = 4$ remains elusive, despite significant effort \cite{CHS21,PPHH10}.

It was shown in \cite{JK10} that $\rho_{2/n}$ is $r$-copy undistillable if and only if the orthogonal projections defined recursively by
\begin{align}\label{eq:proj_def}\begin{split}
	P_{1} & \defeq \ketbra{\phi_+}{\phi_+} \in (M_n \otimes M_n)^{+} \quad \text{and} \\
	P_{r} & \defeq (I - P_{1}) \otimes P_{r-1} + P_{1} \otimes (I - P_{r-1}) \in (M_n^{\otimes r} \otimes M_n^{\otimes r})^{+} \quad \forall \, r \geq 2
\end{split}\end{align}
satisfy $\|P_r\|_{S(2)} \leq 1/2$. Here, it is understood that the partitioning of the $2r$ tensor factors used to compute $\|P_r\|_{S(2)}$ is the one in which the first factors of $P_1$ always ``stay together'', as do the second factors of $P_1$. That is, if we label the two tensor factors of $P_1$ as $P_{1} \in A \otimes B$ then we have $P_r \in (A^{\otimes r}) \otimes (B^{\otimes r})$, with the $S(2)$-norm being computed across the central tensor cut, \emph{not} $P_r \in (A \otimes B) \otimes \cdots \otimes (A \otimes B)$.

This connection with the $S(2)$-norm leads to the following characterization of the bound entanglement problem in terms of Birkhoff--James orthogonality and the modified trace distance of $2$-entanglement:

\begin{theorem}\label{thm:nppt_be_equiv}
    Let $\rho_{2/n} \in (M_n \otimes M_n)^{+}$ be the Werner state defined in Equation~\eqref{eq:werner_state_defn} and let $P_r$ be the orthogonal projection defined in Equation~\eqref{eq:proj_def}. The following are equivalent:
    \begin{enumerate}
        \item[(a)] $\rho_{2/n}$ is $r$-copy undistillable.
        
        \item[(b)] $P_r$ is Birkhoff--James orthogonal to every member of $C_{\textup{2-ent}}$.
        
        \item[(c)] There exists a quantum state $\sigma \in (M_n^{\otimes r} \otimes M_n^{\otimes r})^{+}$ with $\mathrm{range}(\sigma) = \mathrm{range}(P_r)$ and $D_{\textup{2-ent}}(\sigma) = 1$.
    \end{enumerate}
\end{theorem}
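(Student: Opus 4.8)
The plan is to route everything through the known criterion from \cite{JK10}: $\rho_{2/n}$ is $r$-copy undistillable if and only if $\|P_r\|_{S(2)} \le 1/2$. With this as a hub, it suffices to show that each of (b) and (c) is equivalent to the inequality $\|P_r\|_{S(2)} \le 1/2$, which I would do using Theorem~\ref{thm:mod_tr_dist_resource}, Theorem~\ref{thm:k_ent_equiv}, and the scale-invariance of Birkhoff--James orthogonality in its first argument.

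First I would record two elementary observations. (i) Birkhoff--James orthogonality is homogeneous in the first argument: for any $c > 0$, $\|cA\|_{\textup{tr}} \le \|cA + \lambda B\|_{\textup{tr}}$ for all $\lambda \in \mathbb{R}$ if and only if $\|A\|_{\textup{tr}} \le \|A + \lambda B\|_{\textup{tr}}$ for all $\lambda \in \mathbb{R}$ (rescale $\lambda$). Hence $P_r$ is Birkhoff--James orthogonal to a set exactly when the normalized state $\sigma_0 := P_r/\tr(P_r)$ is. (ii) Since $P_r$ is itself an orthogonal projection, the orthogonal projection onto $\mathrm{range}(P_r)$ is $P_r$; consequently $\sigma_0$ is a mixed quantum state whose range projection is $P_r$, and more generally any state $\sigma$ with $\mathrm{range}(\sigma) = \mathrm{range}(P_r)$ has range projection equal to $P_r$.

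For the equivalence (a) $\Leftrightarrow$ (b): by observation (i), (b) holds if and only if $\sigma_0$ is Birkhoff--James orthogonal to every member of $C_{\textup{2-ent}}$. Applying Theorem~\ref{thm:mod_tr_dist_resource} with $C = C_{\textup{2-ent}}$ and $\rho = \sigma_0$ (whose range projection is $P_r$ by observation (ii)), this is equivalent to $D_{\textup{2-ent}}(\sigma_0) = 1$, and Theorem~\ref{thm:k_ent_equiv} turns this into $\|P_r\|_{S(2)} \le 1/2$, which by \cite{JK10} is (a). For (a) $\Leftrightarrow$ (c): if $\|P_r\|_{S(2)} \le 1/2$, then taking $\sigma = \sigma_0$ gives a state with $\mathrm{range}(\sigma) = \mathrm{range}(P_r)$ and, by Theorem~\ref{thm:k_ent_equiv}, $D_{\textup{2-ent}}(\sigma) = 1$, so (c) holds; conversely, if some $\sigma$ with $\mathrm{range}(\sigma) = \mathrm{range}(P_r)$ satisfies $D_{\textup{2-ent}}(\sigma) = 1$, then its range projection is $P_r$ by observation (ii), so Theorem~\ref{thm:k_ent_equiv} forces $\|P_r\|_{S(2)} \le 1/2$, i.e.\ (a).

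I do not expect a genuine obstacle here: the statement is essentially a repackaging of results already proved, and there is no hard computation. The only points requiring care are the scale-invariance step, which lets the trace-one hypotheses of Theorem~\ref{thm:mod_tr_dist_resource} and Theorem~\ref{thm:k_ent_equiv} be applied to the non-normalized projection $P_r$, and the observation that the range condition in (c) pins down the range projection to be exactly $P_r$, so that the existential quantifier in (c) really is governed by the single quantity $\|P_r\|_{S(2)}$.
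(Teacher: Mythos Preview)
Your proposal is correct and follows essentially the same route as the paper: both arguments use the criterion $\|P_r\|_{S(2)}\le 1/2$ from \cite{JK10} as a hub and then invoke Theorem~\ref{thm:mod_tr_dist_resource} and Theorem~\ref{thm:k_ent_equiv} to obtain (b) and (c). Your treatment is in fact slightly more careful than the paper's, since you explicitly handle the normalization issue (passing from $P_r$ to $\sigma_0=P_r/\tr(P_r)$ via homogeneity) that the paper leaves implicit when applying Theorem~\ref{thm:mod_tr_dist_resource} to the non--trace-one matrix $P_r$.
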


\begin{proof}
    By Theorem~\ref{thm:k_ent_equiv}, we know that condition (c) happens if and only if $\|P_r\|_{S(2)} \leq 1/2$, and we already discussed the fact that $\|P_r\|_{S(2)} \leq 1/2$ is equivalent to $\rho_{2/n}$ being $r$-copy undistillable \cite{JK10}. This shows that (c) is equivalent to (a).
    
    The fact that $\|P_r\|_{S(2)} \leq 1/2$ is also equivalent to (b) (and thus (a) and (c) are equivalent to (b)) follows from the equivalence of conditions (b) and (c) of Theorem~\ref{thm:mod_tr_dist_resource} when $C = C_{\textup{2-ent}}$.
\end{proof}

\section{Conclusions and Outlook}\label{sec:conclusions}

The notion of Birkhoff--James orthogonality is an extension of  the usual notion of orthogonality. We considered Birkhoff--James orthogonality in the setting of  $n\times n$ complex-valued matrices together with the trace norm. Our main result (Theorem~\ref{thm:main_res}) characterized when a Hermitian matrix $H$ is Birkhoff--James orthogonal in the trace norm to a given  positive semidefinite matrix $B$, via trace conditions involving the product of $B$ and the positive and negative eigenprojection matrices of $H$. These eigenprojection matrices are key to our study, and we used them to further characterize when $H$ is Birkhoff--James orthogonal in the trace norm to the set of all positive semidefinite diagonal matrices, and when $H$ is Birkhoff--James orthogonal in the trace norm to the set of all (not just positive semidefinite) diagonal matrices in certain special cases (e.g., when $H$ has a certain rank, when $H$ is positive semidefinite, and/or the dimension is small).

Our results have direct applications to quantum resource theories; we showed that the modified trace distance between any quantum state (positive semidefinite, trace-one matrix) $\rho$ and a given closed convex cone $C$ of positive $n\times n$ matrices equals $1$ (the maximum possible value) precisely when $\rho$ is Birkhoff--James orthogonal in the trace norm to every member of $C$. This allowed us to generalize a result from \cite{johnston2018modified}, which characterized when the modified trace distance of coherence of a given pure quantum state is maximal, in numerous ways: we now have a version of that result for non-pure states (i.e., matrices of higher rank), we now have a version of that result that applies to resource theories other than just quantum coherence, and we now have a bound on how large the rank of such states can be.

Our results, when applied to the resource theory of $2$-entanglement (i.e., Schmidt number $2$), provide an intriguing connection to the NPPT bound entanglement problem. While we do not solve that problem (it has been open for over a decade and is notoriously difficult), we showed that it can be phrased in terms of Birkhoff--James orthogonality, thus opening up a wide variety of new tools that can be used to explore it.

Our results highlight the clear utility of Birkhoff--James orthogonality. In terms of linear algebraic considerations, one open problem would be to extend the work from Hermitian to normal matrices, which we believe is possible but we avoided the added complexity of the problem as we were interested in the applications to quantum resource theory, which involves positive semidefinite matrices. It would also be interesting to exactly characterize when a given Hermitian matrix $H$ is Birkhoff--James orthogonal to every diagonal matrix, but our Examples~\ref{exam:higher_dim_bj_orth} and~\ref{exam:n5_rank2} suggest that such a characterization might be quite delicate and difficult to pin down.

\section*{Acknowledgements}
 N.J.\ was supported by NSERC Discovery Grant number RGPIN-2016-04003. R.P.\ was supported by NSERC Discovery Grant number 400550. S.P.\ was supported by NSERC Discovery Grant number 1174582, the Canada Foundation for Innovation (CFI) grant number 35711, and the Canada Research Chairs (CRC) Program grant number 231250. 
 
\bibliographystyle{alpha}
\bibliography{references}

\newcommand{\etalchar}[1]{$^{#1}$}
\begin{thebibliography}{RFWG19}

\bibitem[Abe06]{Abe06}
J.~Aberg.
\newblock Quantifying superposition.
\newblock E-print:
  \href{https://arxiv.org/abs/quant-ph/0612146}{arXiv:quant-ph/0612146}, 2006.

\bibitem[ALRV12]{ALRV12}
E.~Andruchow, G.~Larotonda, L.~Recht, and A.~Varela.
\newblock A characterization of minimal {H}ermitian matrices.
\newblock {\em Linear Algebra and its Applications}, 436(7):2366--2374, 2012.

\bibitem[BCP14]{BCP14}
T.~Baumgratz, M.~Cramer, and M.~B. Plenio.
\newblock Quantifying coherence.
\newblock {\em Physical Review Letters}, 113:140401, 2014.

\bibitem[BCPT05]{boman2005factor}
E.G. Boman, D.~Chen, O.~Parekh, and S.~Toledo.
\newblock On factor width and symmetric {H}-matrices.
\newblock {\em Linear algebra and its applications}, 405:239--248, 2005.

\bibitem[Bha97]{Bha97}
R.~Bhatia.
\newblock {\em Matrix analysis}.
\newblock Springer, 1997.

\bibitem[B{\v{S}}99]{bhatia1999orthogonality}
R.~Bhatia and P.~{\v{S}}emrl.
\newblock Orthogonality of matrices and some distance problems.
\newblock {\em Linear algebra and its applications}, 287(1-3):77--85, 1999.

\bibitem[CG19]{CG18}
E.~Chitambar and G.~Gour.
\newblock Quantum resource theories.
\newblock {\em Reviews of Modern Physics}, 91:025001, 2019.

\bibitem[CHSZ21]{CHS21}
L.~Chen, H.~He, X.~Shi, and L.-J. Zhao.
\newblock Proving the distillability problem of two-copy $4 \times 4$ {W}erner
  states for monomial matrices.
\newblock {\em Quantum Information Processing}, 20:157, 2021.

\bibitem[DSS{\etalchar{+}}00]{DSSTT00}
D.~P. DiVincenzo, P.~W. Shor, J.~A. Smolin, B.~M. Terhal, and A.~V. Thapliyal.
\newblock Evidence for bound entangled states with negative partial transpose.
\newblock {\em Physical Review A}, 61:062312, 2000.

\bibitem[Gha10]{Gha10}
S.~Gharibian.
\newblock Strong {NP}-hardness of the quantum separability problem.
\newblock {\em Quantum Information and Computation}, 10:343--360, 2010.

\bibitem[Gla63]{Glau63}
R.~J. Glauber.
\newblock Coherent and incoherent states of the radiation field.
\newblock {\em Physical Review}, 131:2766, 1963.

\bibitem[Gro17]{grover2017orthogonality}
Priyanka Grover.
\newblock Orthogonality of matrices in the ky fan k-norms.
\newblock {\em Linear and Multilinear Algebra}, 65(3):496--509, 2017.

\bibitem[GT09]{GT09}
O.~G\"{u}hne and G.~T\'{o}th.
\newblock Entanglement detection.
\newblock {\em Phys. Rep.}, 474:1--75, 2009.

\bibitem[Gur03]{Gur03}
L.~Gurvits.
\newblock Classical deterministic complexity of {E}dmonds' problem and quantum
  entanglement.
\newblock In {\em Proceedings of the Thirty-Fifth Annual ACM Symposium on
  Theory of Computing}, pages 10--19, 2003.

\bibitem[HH99]{HH99}
M.~Horodecki and P.~Horodecki.
\newblock Reduction criterion of separability and limits for a class of
  distillation protocols.
\newblock {\em Physical Review A}, 59:4206--4216, 1999.

\bibitem[HHH97]{HHH97}
M.~Horodecki, P.~Horodecki, and R.~Horodecki.
\newblock Inseparable two spin-1/2 density matrices can be distilled to a
  singlet form.
\newblock {\em Physical Review Letters}, 78:574--577, 1997.

\bibitem[HHH98]{HHH98}
M.~Horodecki, P.~Horodecki, and R.~Horodecki.
\newblock Mixed-state entanglement and distillation: Is there a ``bound''
  entanglement in nature?
\newblock {\em Physical Review Letters}, 80:5239--5242, 1998.

\bibitem[HHHH09]{HHH09}
R.~Horodecki, P.~Horodecki, M.~Horodecki, and K.~Horodecki.
\newblock Quantum entanglement.
\newblock {\em Rev. Mod. Phys.}, 81:865--942, 2009.

\bibitem[HR{\.Z}20]{HRZ20}
P.~Horodecki, {\L}.~Rudnicki, and K.~{\.Z}yczkowski.
\newblock \emph{Five open problems in quantum information}.
\newblock E-print: arXiv:2002.03233 [quant-ph], 2020.

\bibitem[Jam47]{james1947orthogonality}
R.~C. James.
\newblock Orthogonality and linear functionals in normed linear spaces.
\newblock {\em Transactions of the American Mathematical Society},
  61(2):265--292, 1947.

\bibitem[JK10]{JK10}
N.~Johnston and D.~W. Kribs.
\newblock A family of norms with applications in quantum information theory.
\newblock {\em Journal of Mathematical Physics}, 51:082202, 2010.

\bibitem[JK11]{JK11}
N.~Johnston and D.~W. Kribs.
\newblock A family of norms with applications in quantum information theory
  {II}.
\newblock {\em Quantum Information and Computation}, 11:104--123, 2011.

\bibitem[JLP18]{johnston2018modified}
N.~Johnston, C.-K. Li, and S.~Plosker.
\newblock The modified trace distance of coherence is constant on most pure
  states.
\newblock {\em Journal of Physics A: Mathematical and Theoretical},
  51(41):414010, 2018.

\bibitem[KMP21]{kribs2019isoclinic}
D.~W. Kribs, D.~Mammarella, and R.~Pereira.
\newblock Isoclinic subspaces and quantum error correction.
\newblock {\em Oper. Matrices}, 15(2):571–580, 2021.

\bibitem[LPP15]{li2015some}
J.~Li, R.~Pereira, and S.~Plosker.
\newblock Some geometric interpretations of quantum fidelity.
\newblock {\em Linear Algebra and its Applications}, 487:158--171, 2015.

\bibitem[LS02]{li2002orthogonality}
C.-K. Li and H.~Schneider.
\newblock Orthogonality of matrices.
\newblock {\em Linear algebra and its applications}, 347(1-3):115--122, 2002.

\bibitem[PPHH10]{PPHH10}
L.~Pankowski, M.~Piani, M.~Horodecki, and P.~Horodecki.
\newblock A few steps more towards {NPT} bound entanglement.
\newblock {\em IEEE Transactions on Information Theory}, 56:4085--4100, 2010.

\bibitem[RBC{\etalchar{+}}18]{RBC18}
M.~Ringbauer, T.~R. Bromley, M.~Cianciaruso, L.~Lami, W.~Y.~S. Lau, G.~Adesso,
  A.~G. White, A.~Fedrizzi, and M.~Piani.
\newblock Certification and quantification of multilevel quantum coherence.
\newblock {\em Physical Review X}, 8:041007, 2018.

\bibitem[Reg18]{Reg18}
B.~Regula.
\newblock Convex geometry of quantum resource quantification.
\newblock {\em Journal of Physics A: Mathematical and Theoretical},
  51(4):045303, 2018.

\bibitem[RFWG19]{RFWG19}
B.~Regula, K.~Fang, X.~Wang, and M.~Gu.
\newblock One-shot entanglement distillation beyond local operations and
  classical communication.
\newblock {\em New Journal of Physics}, 21(10):103017, 2019.

\bibitem[RPL16]{rana2016trace}
S.~Rana, P.~Parashar, and M.~Lewenstein.
\newblock Trace-distance measure of coherence.
\newblock {\em Physical Review A}, 93(1):012110, 2016.

\bibitem[Sud63]{Su63}
E.~C.~G. Sudarshan.
\newblock Equivalence of semiclassical and quantum mechanical descriptions of
  statistical light beams.
\newblock {\em Physical Review Letters}, 10:277, 1963.

\bibitem[TH00]{TH00}
B.~M. Terhal and P.~Horodecki.
\newblock Schmidt number for density matrices.
\newblock {\em Physical Review A}, 61:040301(R), 2000.

\bibitem[Var76]{varga1976recurring}
R.~S. Varga.
\newblock On recurring theorems on diagonal dominance.
\newblock {\em Linear Algebra and its Applications}, 13(1-2):1--9, 1976.

\bibitem[Wat92]{Watson}
G.~A. Watson.
\newblock Characterization of the subdifferential of some matrix norms.
\newblock {\em Linear Algebra and its Applications}, 170(0):33--45, 1992.

\bibitem[Wat04]{Wat04}
J.~Watrous.
\newblock Many copies may be required for entanglement distillation.
\newblock {\em Physical Review Letters}, 93:010502, 2004.

\bibitem[Wat18]{Wat18}
J.~Watrous.
\newblock {\em The Theory of Quantum Information}.
\newblock Cambridge University Press, 2018.

\bibitem[Wer89]{Wer89}
R.~F. Werner.
\newblock Quantum states with {E}instein-{P}odolsky-{R}osen correlations
  admitting a hidden-variable model.
\newblock {\em Physical Review A}, 40:4277--4281, 1989.

\bibitem[YZXT16]{yu2016alternative}
X.-D. Yu, D.-J. Zhang, G.~F. Xu, and D.~M. Tong.
\newblock Alternative framework for quantifying coherence.
\newblock {\em Physical Review A}, 94(6):060302(R), 2016.

\end{thebibliography}
\end{document}